\newcommand{\textfrac}[2]{{\textstyle \frac{#1}{#2}}}
\newcommand{\R}{\mathbb{R}}
\newcommand{\UM}{U\!M}
\newcommand{\PM}{P\!M}
\newcommand{\fp}{f_{\!\text{\tiny p}}}
\newcommand{\vol}{\mathrm{vol}}
\newcommand{\fourg}{{}^4\mathbf{g}}
\DeclareMathOperator{\tr}{tr}
\theoremstyle{plain}
\newtheorem{Theorem}{Theorem}
\newtheorem{Lemma}{Lemma}
\theoremstyle{remark}
\newtheorem*{Remark}{Remark}
\title{Oscillations toward the singularity of LRS Bianchi type~IX cosmological models with Vlasov matter}
\author{Simone Calogero\footnote{E-Mail: calogero@ugr.es}\\[0.2cm]
Departamento de Matem\'atica Aplicada\\
Facultad de Ciencias, Universidad de Granada\\
18071 Granada, Spain\\[0.5cm]
J.~Mark Heinzle\thanks{E-Mail: Mark.Heinzle@univie.ac.at}\\[0.2cm]
Gravitational Physics\\ Faculty of Physics, University of Vienna\\
1090 Vienna, Austria}
\date { }
\begin{document}

\maketitle

\begin{abstract}
We analyze the dynamics of a class of cosmological
solutions of the Einstein-Vlasov equations.
These equations describe an ensemble of collisionless particles
(which represent galaxies or clusters of galaxies)
that interact gravitatively through Einstein's equations of general relativity.
The cosmological models we consider are spatially homogeneous, of Bianchi type~IX,
and locally rotationally symmetric (LRS).
We prove that generic solutions exhibit an oscillatory approach toward the singularities (the `big bang' in the past and 
the `big crunch' in the future); this is in contrast to
the behavior of Einstein-vacuum or Einstein-Euler solutions.
To establish this result 
we make use of dynamical systems theory; 
we introduce dimensionless dynamical variables that are
defined on a compact state space; in this formulation 
the oscillatory behavior of generic solutions is represented by 
an approach to heteroclinic cycles.
\end{abstract}


\section{Introduction}

\label{intro}


A large number of recent rigorous results on 
the dynamics of cosmological solutions (i.e., models for the `universe')
of the Einstein field equations 
is due to the successful application 
of the theory of dynamical systems to general relativity. 
The Einstein equations form a highly non-linear system of partial differential equations that describe 
the evolution of the metric of the space-time, the latter being represented by a four-dimensional Lorentzian 
manifold. In the case of spatially homogeneous solutions, which are of interest in cosmology, the Einstein 
equations reduce to a system of non-linear {\it ordinary} differential equations, which can be analyzed using 
the powerful methods of dynamical systems theory. 
This is typically achieved by going over from the 
standard metric variables of the Einstein equations 
to dimensionless variables via
conformal rescalings and the introduction
of normalizations to regularize the
equations and to obtain an 
autonomous finite-dimensional dynamical system over a compact state space.
We refer to~\cite{WE} 
for an overview on the theory of dynamical systems in cosmology.



This paper concerns cosmological solutions of the Einstein
equations coupled to collisionless matter (Vlasov matter).
Solutions of the Einstein-Vlasov 
equations represent ensembles of massive particles (like stars in a galaxy)
that interact 
through the gravitational field they create collectively.
In cosmological applications, the particles are thought to represent
galaxies (or galaxy clusters) in the universe.

According to 
the standard theory for the evolution of the universe, 
massless particles (photons) account for most of the energy density in the universe 
during the time of `radiation dominance', which begins at about $1$ second after the big bang and
ends at the time of decoupling between radiation and matter about $10^5$ years after the big bang.
From this time onward, the universe is `matter dominated' and galaxies begin to form 
about $10^6$ years after the big bang. 
It is immediate that the collisionless matter model that describes
ensembles of massive particles is a useful tool to
model structure formation.
The time of `radiation dominance', on the other hand, 
suggests the study of ensembles of massless particles 
(where in the approach to the big bang singularity
collisions are of course expected to play an increasing role---this is not
captured by the collisionless matter model).
Interestingly enough, the dynamics toward the big bang singularity in the case of 
massive particles 
is qualitatively the same as the dynamics of the massless particles model. 
This seems to be a generic property 
of Vlasov matter, see~\cite{HU,RU}. 

The existence of a big bang (an `initial singularity') is predicted 
by the singularity theorems of general relativity~\cite{HE}.
Under very general assumptions,
there will exist causal curves that `terminate' at a singularity, and
physical quantities representing curvature or the energy density
of the matter will diverge along these inextendible curves.
The detailed characterization of these singularities is 
an important problem in general relativity and cosmology.
One interesting question in this respect concerns the
details of the divergence of the curvature toward
the singularity (e.g., along a distinguished congruence of curves 
representing the (spacetime) trajectories of the matter).
The well-known BKL conjecture~\cite{BKL} states that the approach
to the singularity will in general be `oscillatory', which means that
appropriately rescaled (curvature) quantities will oscillate (with
a rapidly increasing frequency) instead of converging monotonically
toward the singularity. The paradigm of this type of behavior
is the so-called `Mixmaster' behavior that originates
from the study of spatially homogeneous spacetimes.
We refer to sections~5 and~6 of the textbook~\cite{WE} for a good overview.
Another important problem concerns the question of whether the asymptotic
dynamics of solutions toward the singularity
are sensitive to the choice of 
matter model or not (i.e., whether ``matter matters'').
Although for matter models like perfect fluids 
the latter seems to be the case,
there are matter models such that the solutions of
the Einstein-matter equations exhibit an asymptotic behavior
that is different from that of vacuum solutions.
The collisionless (Vlasov) matter model is a prime example,
and the results of the present work are results in this vein.

To obtain a detailed characterization of the dynamics of solutions of the Einstein-Vlasov equations,
it is necessary to assume a high degree of symmetry
as, e.g., spatial homogeneity.
The global dynamics of spatially homogeneous 
solutions of the Einstein-Vlasov system has been studied extensively, see, 
e.g.,~\cite{CH3,HU,R,R1,RT,RU} for applications of dynamical system theory
to the problem.
The reformulation of the Einstein(-Vlasov) equations 
as a dynamical system has proved to be highly advantageous. 
The reason for the success of dynamical systems methods lies in the fact that 
the behavior of the spacetime geometry in the neighborhood of an (initial) singularity is
determined by the nature of the $\alpha$-limit set of the dynamical system;
see section~5.3 of \cite{WE} for a good introduction.
In the simplest cases, the $\alpha$-limit set is an isolated fixed point; typically,
this corresponds to the spacetime curvature growing monotonically. If the $\alpha$-limit
set is more complicated, however, for example a heteroclinic cycle with a
finite number of fixed points, then the approach to the singularity will
be `oscillatory'. The paradigm of oscillatory behavior, the Mixmaster behavior described in the BKL conjecture, is
even more intricate---it is induced by infinite heteroclinic chains.

The present work extends the results of~\cite{RT} on massless Vlasov matter
and~\cite{RU} on massive particles. 
The results of these references concern the dynamics of Einstein-Vlasov solutions 
that satisfy the most restrictive symmetry assumptions
(LRS Bianchi type~I,~II and~III);
the analysis is based on techniques from dynamical systems theory 
in conjunction with the use of Hubble-normalized dimensionless variables, see~\cite{WE}.
In the present paper we refine these techniques
to investigate a class of solutions exhibiting a larger number of (true)
degrees of freedom: LRS Bianchi type~IX; we refer to section~\ref{sec:der} for a definition.
Our analysis employs a different set of dimensionless variables to regularize the
equations and to recast the Einstein-Vlasov system into an 
autonomous finite-dimensional dynamical system over a compact state space. 
We note that
the methods we use to prove our main result are closely connected with 
the general formalism developed in~\cite{CH1}.
However, ensembles of massive collisionless particles do not directly fall into the class of matter models 
considered in~\cite{CH1}, which makes it necessary to generalize 
the approach.

The paper is largely self-contained.
In section~\ref{sec:der} we discuss the collisionless matter model
and give a derivation of the Einstein-Vlasov equations for the class
of cosmological models we consider; the symmetry assumptions
(LRS Bianchi type~IX) are explained.
In section~\ref{sec:refor} we reformulate the equations in terms
of dimensionless variables; however, another reformulation of the equations
is necessary to obtain a regular autonomous finite-dimensional dynamical system over a 
compact state space.
The analysis of this dynamical system is performed in section~\ref{result}.
At the end of this section we state the main theorem:
We prove that the $\alpha$- and the $\omega$-limit set of generic orbits of 
the dynamical system that correspond to LRS Bianchi type~IX solutions of the Einstein-Vlasov equations is a heteroclinic cycle.
In the concluding remarks, section~\ref{conc}, we illuminate the main
result from a physical perspective by putting it into a broader context.


\section{Derivation of the equations}

\label{sec:der}


Consider an ensemble of massive particles that are in geodesic motion 
in a `spacetime' $(M, \fourg)$, which is a smooth four-dimensional manifold
equipped with a metric tensor field  of Lorentzian signature $({-+++})$.
The assumption of geodesic motion reflects the condition of absence 
of interactions between the particles other than gravity; since the particles interact solely 
through the gravitational field they create collectively, the governing equations are
Einstein's field equations of general relativity.
This type of matter is commonly called `collisionless matter', since, in particular, interactions by 
collisions are excluded.
Examples of physical systems that are believed to be well approximated by 
the collisionless matter model in gravity 
are galaxies or galaxy clusters; in the former case, the 
particles are the stars of the galaxy, while in the cosmological setting, 
the particles are the galaxies of the cluster~\cite{And, And2, galactic, RR}.

The ensemble of particles is represented by a `particle distribution function' 
$\fp: \mathbb{R}^+ \times \UM \rightarrow [0,\infty)$,
where $\UM\subset TM$ is the bundle of unit mass shells (four-velocity hyperboloids),
i.e., the subset of the tangent bundle $TM$ given by the condition
$\fourg(u,u) = {-1}$ (where $u$ is a future-directed four-velocity).
Let $x\in M$ and $u\in \UM$ be a four-velocity at $x$; let
$\vol_{\UM}$ denote the induced volume element on $\UM$; 
then
\[
\fp(m,x,u)\, \vol_{\UM}\, d m
\]
represents the proper number density of those particles whose mass is 
in an interval of (infinitesimal) length $d m$ around $m$ and whose four-velocity is in an (infinitesimal) 
volume $\vol_{\UM}$ containing $u$.
The proper (rest) mass density of the particles whose four-velocity is in a 
volume $\vol_{\UM}$ containing $u$ is 
\begin{equation}\label{massdens}
f(x,u)\, \vol_{\UM} = \Big( \int_{\mathbb{R}^+} m \fp(m,x,u) \,d m \Big) \:\vol_{\UM} \:;
\end{equation}
this relation defines the `mass distribution function' $f$.

Let $(t,x^i)$ be a system of coordinates on $M$ 
and $\{e_0,e_i\}$ be a frame, e.g., the coordinate frame
$\{\partial_t, \partial_{x^i}\}$; we assume that $e_0$ is (future-directed) timelike
and $e_i$ spacelike, $i=1,2,3$.
Then the spatial components $u^i$ of the four-velocity (w.r.t.\ the frame)
are coordinates on the hyperboloid $\UM$ and we can express the invariant measure on $\UM$ 
as $\vol_{\UM} = \sqrt{|\det \fourg|}\: |u_0|^{-1} \,du^1 d u^2 d u^3$,
where $u_0$ is determined from $u^i$
by the normalization relation $\fourg(u,u) = g_{\mu\nu} u^\mu u^\nu = {-1}$.
We adhere to the convention that spacetime indices are denoted by Greek letters, whose range is $0,1,2,3$, while
Latin indices are spatial indices and take the values $1,2,3$.
We use the Einstein summation convention.

Regarding $f$ (and $\fp$) as functions of the coordinates $t$, $x^i$, $u^j$ (and $m$), $i,j=1,2,3$,
we find that the energy-momentum tensor of the ensemble is given by
\begin{equation}\label{Tvlasov}
T^{\mu\nu} = \int\vol_{\UM}\int dm \,m \,\fp\, u^\mu u^\nu =
\int\vol_{\UM}\, f\, u^\mu u^\nu = 
\int f\, u^\mu u^\nu \, \sqrt{|\det \fourg|}\: |u_0|^{-1} \,du^1 d u^2 d u^3\:,
\end{equation}
where $u_0 = u_0(u^1,u^2,u^3)$ by 
the condition $\fourg(u,u) = g_{\mu\nu} u^\mu u^\nu = {-1}$.

\begin{Remark}
A common special case is the case where the ensemble of particles consists of one species
of particles with equal mass $\mathsf{m}$, which corresponds to
$\fp(m,x,u) = \delta(m-\mathsf{m}) \,\bar{\mathsf{f}}(x,u)$. 
In this context it is customary to use 
the four-momentum $v = \mathsf{m} u$ as the variable of the distribution function;
taking the relation between the volume elements on the unit mass shell 
$\UM$ and the mass shell $\PM = \{ v \,|\, \fourg(v,v) ={-\mathsf{m}^2} \}$ into account, we see that
$\mathsf{f}(x,v) \vol_{\PM}$ with $\mathsf{f}(x,v) = \mathsf{m}^{-3}\, \bar{\mathsf{f}}(x,u)$ represents the proper number density.
Then $f(x,u) = \mathsf{m}^4 \,\mathsf{f}(x,v)$ and~\eqref{Tvlasov}
becomes
\begin{equation*}
T^{\mu\nu} = \mathsf{m}^{-4}\, \int f\big(x, \textfrac{v}{\mathsf{m}}\big)\, v^\mu v^\nu \, \sqrt{|\det \fourg|}\: \frac{d v^1 d v^2 d v^3}{|v_0|}
= \int \mathsf{f}(x,v)\, v^\mu v^\nu \, \sqrt{|\det \fourg|}\: \frac{d v^1 d v^2 d v^3}{|v_0|} \:,
\end{equation*}
where $v_0$ is determined from $v^i$
by the mass shell relation $\fourg(v,v) = g_{\mu\nu} v^\mu v^\nu = {-\mathsf{m}^2}$.
\end{Remark}

\begin{Remark}
The formalism to describe ensembles of massless particles is analogous. In the massless case, 
the domain of the distribution function is the bundle of future light cones, i.e., the
set of future-directed null vectors.
Accordingly, the energy-momentum tensor is given by~\eqref{Tvlasov}, 
where $u_0$ is determined 
from $u^i$ by the condition $\fourg(u,u) = g_{\mu\nu} u^\mu u^\nu = 0$.
\end{Remark}

Both the particle distribution function $\fp$ and the mass distribution function $f$ 
satisfy the Vlasov equation
\begin{equation}\label{vlasoveq}
\partial_t f +\frac{u^j}{u^0}\partial_{x^j}f-\frac{1}{u^0}\Gamma^j_{\mu\nu}u^\mu u^\nu\partial_{u^j}f=0\:,
\end{equation}
which reflects the condition of geodesic motion of the particles; $\Gamma^\mu_{\nu\sigma}$ are the Christoffel symbols. 
Note in particular that the characteristic curves of the Vlasov equation, along which $f$ is constant, 
coincide with the lift on $\UM$ of the spacetime geodesics. The gravitational interaction
of the particles is modeled by the Einstein-Vlasov system, i.e.,
by coupling~\eqref{vlasoveq} to the Einstein equations of general relativity,
\begin{equation}\label{einsteineqs}
R_{\mu\nu}-\frac{1}{2}g_{\mu\nu}R=T_{\mu\nu}\:.
\end{equation}
In these equations, $R_{\mu\nu}$ is the Ricci tensor of the metric $\fourg$ 
and $R=g^{\mu\nu}R_{\mu\nu}$ the Ricci scalar;
for $T_{\mu\nu}$ we use the energy-momentum tensor~\eqref{Tvlasov} representing the
Vlasov matter.
We adopt units such that $8\pi G=c=1$, where $G$ is Newton's gravitational constant and $c$ the speed of light.

In this paper 
we consider spatially homogeneous spacetimes of Bianchi type~IX
that are locally rotationally symmetric (LRS), i.e., spacetimes
of the form $M=I\times S^3$ (where $I$ is an interval of $\mathbb{R}$)
with metric 
\begin{equation}\label{metric}
\fourg= -dt^2 + g_{11}(t)\:\hat{\omega}^1\otimes\hat{\omega}^1+
g_{22}(t)\,(\hat{\omega}^2\otimes\hat{\omega}^2+\hat{\omega}^3\otimes\hat{\omega}^3)\:,
\end{equation}
where $\{\hat{\omega}^1,\hat{\omega}^2,\hat{\omega}^3\}$ 
is a time-independent coframe on $S^3$ that satisfies 
$d\hat{\omega}^1=-\hat{\omega}^2\wedge\hat{\omega}^3$ (and cyclic permutations). 
As proved in~\cite{MM}, the general solution of the Vlasov equation~\eqref{vlasoveq} on a background spacetime with 
metric~\eqref{metric} can be expressed as
\begin{equation}\label{generalf}
f=f_0(u_1,(u_2)^2+(u_3)^2)\:,
\end{equation} 
where $u_i = g_{i j} u^j$, $i=1,2,3$, and 
$f_0:\R\times\R_+\to\R_+$ is an arbitrary sufficiently smooth function.

Let us compute~\eqref{Tvlasov} from~\eqref{generalf}.
Denoting by $g$ the spatial Riemannian metric we have \mbox{$|\det \fourg | = \det g$}
and we find $d u^1 d u^2 d u^3 = (\det g)^{-1} d u_1 d u_2 d u_3$; moreover,
$|u_0|^2 = 1 + g^{11} u_1^2 + g^{22} (u_2^2 + u_3^2)$.
Therefore, the energy density $\rho= T_{00}$ is given by
\begin{subequations}\label{compT}
\begin{align}\label{rho}
\rho & =(\det g)^{-1/2}\:\int f_0\, \big(1+g^{11}u_1^2+g^{22}(u_2^2+u_3^2)\big)^{1/2}\;du_1du_2du_3
\intertext{and the principal pressures $p_1=T^1_{\ 1}$, $p_2=T^2_{\ 2}$, $p_3=T^3_{\ 3}$ are}
\label{pi}
p_i & =(\det g)^{-1/2} \:\int f_0 \,g^{ii}\, u_i^2\, \big(1 +g^{11} u_1^2+g^{22}(u_2^2+u_3^2)\big)^{-1/2}\;du_1du_2du_3\:,
\end{align}
\end{subequations}
where there is no summation over $i$.
Since $f_0$ is the function~\eqref{generalf},
we find $p_2=p_3$.

\begin{Remark}
The energy density and the principal pressures~\eqref{compT} depend on the 
arbitrary function $f_0$. 
This function can be interpreted as the `initial data' for $f$ at some time $t=t_0$,
because $f(t_0, u^1, u^2, u^3) = f_0\big( g_{1 1}(t_0) u^1, (g_{2 2}(t_0))^2 ((u^2)^2+(u^3)^2)\big)$.
Once the initial data $f_0$ is prescribed, 
$\rho$ and $p_1$, $p_2$ are functions of the metric components,
which can be interpreted as 
implicit relations between the principal pressures and the energy density, i.e., 
as an `equation of state'.
To compare, let us briefly recall the perfect fluid matter model.
In the perfect fluid case, once an equation of state $p = p(\rho)$ is
prescribed, the energy density and the pressure are determined
by the constraints.
The evolution equations for the matter---the Euler equations---are equivalent to the 
conservation of the energy momentum tensor, $\nabla_\mu T^{\mu\nu}=0$, and 
are thus contained in the Einstein equations~\eqref{einsteineqs} 
(through the Bianchi identities).
In contrast, the Vlasov equation~\eqref{vlasoveq} 
is independent; initial data 
for the first order 
equation~\eqref{vlasoveq} has to be prescribed 
in order to obtain a solution.
\end{Remark}

For the pair~\eqref{metric}--\eqref{generalf} 
to be a candidate for a solution of the Einstein-Vlasov system, 
the energy momentum tensor must be compatible with the LRS assumption,
i.e., diagonal and $T^2_{\ 2}=T^3_{\ 3}$. 
This can be achieved by restricting to distribution functions~\eqref{generalf} 
that are invariant under the transformation $u_1\to -u_1$. These distribution 
functions are called {\it reflection symmetric}, see~\cite{R}. 
Besides reflection symmetry, for technical reasons, we also assume 
that $f_0$ has {\it split support}, which means that 
the support of $f_0$ does not intersect any of the axes.


\begin{Remark}
The (rest) mass current density of particles 
is given as
\[
N^{\mu}=(\det g)^{-1/2} \: \int f_0 \,u^\mu \: |u_0|^{-1} \,d u_1 d u_2 d u_3 
\]  
and satisfies $\nabla_\mu N^\mu=0$, which expresses the conservation of (rest) mass.
A straightforward consequence of the assumption of reflection symmetry is that $N^i=0$, i.e., 
the current density is orthogonal to the hypersurfaces $t=const$. 
The matter model can thus interpreted as being `non-tilted'.
\end{Remark}

The Einstein equations~\eqref{einsteineqs} split into the Hamiltonian constraint
equation
\begin{subequations}\label{einstein}
\begin{equation}\label{constraint}
9 H^2 - \big((k^1_{\ 1})^2+2(k^2_{\ 2})^2\big)+ R = 2 \rho \:,
\end{equation}
and the evolution equations
\begin{align}
\label{metrolution}
\partial_t g^{ii} & = 2 k^i_{\ i} \,g^{ii}\:, \\
\label{evolution}
\partial_t k^i_{\ i}  & = R^i_{\ i} -3 H k^i_{\ i}-p_i+\frac{1}{2}\left(p_1+2p_2-\rho\right)\:,
\end{align}
where there is no summation over $i$.
The quantities $k^1_{\ 1}$ and $k^2_{\ 2} = k^3_{\ 3}$ are the components 
of the second fundamental form of the hypersurfaces $t=\mathrm{const}$, 
the Hubble scalar $H$ is defined by $H = -{\textfrac{1}{3}}\, \tr k$.
\begin{equation}\label{Rs}
R^1_{\ 1}=\frac{1}{2}\frac{(g^{22})^2}{g^{11}}\:,\qquad R^2_{\ 2}=R^3_{\ 3}=g^{22}-\frac{1}{2}\frac{(g^{22})^2}{g^{11}}
\end{equation}
\end{subequations}
are the non-zero components of the Ricci tensor, and $R=R^1_{\ 1}+2R^2_{\ 2}$ is the Ricci scalar.
There is another constraint equation, the momentum constraint, that is satisfied identically by our assumptions.

It is well known that the maximal interval of existence of solutions of the 
Einstein-Vlasov system~\eqref{compT}--\eqref{einstein} is of the form $(t_-,t_+)$, where $|t_\pm|<\infty$; 
we refer to~\cite{R1,R2}.
After a time translation we may assume that the singularity in the past is at $t_-=0$. 
The Einstein vacuum equations are recovered from~\eqref{compT}--\eqref{einstein}
by setting $f_0 \equiv 0$. 
The asymptotic behavior of vacuum solutions 
is characterized by the existence of positive constants $a_\pm$, $b_\pm$, such that
\begin{equation}\label{vacasy}
g_{11}(t)= a_\pm (t-t_\pm)^2 \,\big( 1 + o(1) \big) \:,\qquad g_{22}(t)=g_{33}(t)= b_\pm + o(1) \qquad (t\rightarrow t_\pm)\:.
\end{equation}
This behavior is not exclusive to vacuum solutions; on the contrary,~\eqref{vacasy}
is the typical behavior of solutions associated with a large variety of matter sources,
(non-stiff) perfect fluids being the prime example~\cite{CH1}.
Since the metric
\begin{subequations}
\begin{align}\label{taub}
& \mathrm{T}: \quad -dt^2 + a \,(t-t_\pm)^2\, (d x^1)^2 + b\, \big( (d x^2)^2 + (d x^3)^2 \big) 
\intertext{with $a, b = \mathrm{const}$ 
is the so-called Taub solution (flat LRS Kasner solution),
one refers to~\eqref{vacasy} as an approach to the Taub solution.
(However, the spacetime associated with the metric~\eqref{taub} is $(0,\infty) \times T^3$ 
instead of $I \times S^3$ and thus of Bianchi type~I; the Taub solution does not
satisfy~\eqref{Rs}.)
The second class of vacuum LRS solutions of Bianchi type~I (i.e., on $(0,\infty) \times T^3$)
is the class of non-flat LRS solutions} 
\label{solQ}
& \mathrm{Q}: \quad -dt^2 + a \,(t-t_\pm)^{-2/3}\, (d x^1)^2 + b\,(t-t_\pm)^{4/3}\, \big( (d x^2)^2 + (d x^3)^2 \big) \:;
\end{align}
\end{subequations}
as opposed to~\eqref{taub}, 
the class $\mathrm{Q}$ does not play any particular role
in the context of the asymptotic dynamics of vacuum solutions (or perfect fluid solutions)
of~\eqref{compT}--\eqref{einstein}.

In this paper we are interested in 
the asymptotic behavior of solutions of the Einstein-Vlasov system~\eqref{compT}--\eqref{einstein} 
as $t\rightarrow t_\pm$. 
Our main result can be informally stated as follows: 
{\it In the limit $t\to t_\pm$, generic solutions of the system~\eqref{compT}--\eqref{einstein}
oscillate between the Taub class and the class of non-flat LRS Kasner solutions.}
The asymptotic behavior of solutions of the Einstein-Vlasov system is thus
qualitatively different from that of vacuum solutions.
Therefore, ``collisionless matter matters'', as opposed to (non-stiff) perfect fluid matter 
and several other matter sources which do not affect the
structure of the singularity.

\begin{Remark}
The 
fact that the behavior toward the singularity of 
cosmological models with collisionless matter 
is qualitatively different from that of vacuum and perfect fluid models
is known from LRS Bianchi type~II models~\cite{RT,RU}.
The occurrence of oscillations in the asymptotic dynamics 
of LRS models 
that are induced by the anisotropy of the matter model
has subsequently been studied in some detail, e.g., in~\cite{CH1, CH5}.
\end{Remark}


\section{Reformulated Einstein-Vlasov equations}

\label{sec:refor}


Let 
\[
n = \frac{1}{\sqrt{\det g}} = \sqrt{g^{11} (g^{22})^2}
\quad\text{ and }\quad
s=\frac{g^{22}}{g^{11} + 2 g^{22}} \:.
\]
Note 
that due to~\eqref{massdens} and~\eqref{generalf}, 
$n = n(t) \in (0,\infty)$ is proportional to the mass (or particle) density
of the ensemble of particles (as measured w.r.t.\ the frame associated with~\eqref{metric}).
The variable $s$, on the other hand, satisfies 
$s = s(t) \in (0,\textfrac{1}{2})$ and is a (non-linear) measure of
the deviation of the metric from isotropy; $s = \textfrac{1}{3}$ ($\Leftrightarrow g_{11}=g_{22}$) corresponds to an isotropic metric.
We further define
\[
\ell = \frac{1}{1 + n^{2/3}} \;.
\]
Since $n^{2/3} = (\det g)^{-1/3}$, $n^{2/3}$ corresponds to a length scale of the metric,
and $\ell$ is a non-linear measure of such a scale; $\ell = 0$
corresponds to a singularity ($\det g = 0$); $\ell = 1$
to a state of infinite volume ($\det g = +\infty$).
The equation satisfied by $\ell$ follows directly from~\eqref{metrolution},
\begin{equation}
\partial_t \ell = 2 H \ell (1-\ell) \:;
\end{equation}
recall that $H = {-\textfrac{1}{3}}\,\tr k$ is the Hubble scalar; accordingly, $H > 0$ means 
expansion, $H< 0$ contraction.
For $s$ we find $\partial_t s = -2 s (1- 2s) (k^1_{\ 1} - k^2_{\ 2})$,
where $k^1_{\ 1} - k^2_{\ 2}$ can be identified with (three times) the $2$-$2$-component
of the shear tensor.

We are able to express the energy density~\eqref{rho} and the principal pressures~\eqref{pi}
in terms of $\ell$ and $s$; defining
\begin{equation}\label{wiw}
w_i = \frac{p_i}{\rho}\:,\qquad
w = \frac{p}{\rho} = \frac{1}{3}\, \frac{p_1 + 2 p_2}{\rho} = \frac{1}{3} \,\big(w_1 + 2 w_2\big)
\end{equation}
we obtain 
\begin{subequations}\label{wi}
\begin{align}
w_1 &= (1-\ell) (1-2s)
\,\frac{\int f_0 \,u_1^2\, \left[\ell \big(s^2 (1-2s)\big)^{1/3}+ (1-\ell)\big((1-2s)u_1^2+s(u_2^2+u_3^2)\big)\right]^{-1/2}\,du}%
{\int f_0\left[\ell \big(s^2 (1-2s)\big)^{1/3}+(1-\ell)\big((1-2s)u_1^2+s (u_2^2+u_3^2)\big)\right]^{1/2}du}\:,\\[1ex]
w_2 &= (1-\ell) \, s\; \frac{\int f_0 \,u_2^2 \,
\left[\ell \big(s^2 (1-2s)\big)^{1/3}+ (1-\ell)\big((1-2s)u_1^2+s(u_2^2+u_3^2)\big)\right]^{-1/2}\,du}%
{\int f_0\left[\ell \big(s^2 (1-2s)\big)^{1/3}+(1-\ell)\big((1-2s)u_1^2+s (u_2^2+u_3^2)\big)\right]^{1/2}du}\:,
\end{align}
\end{subequations}
where $du$ abbreviates $du_1du_2du_3$.
The assumption of split support assures that $w_1$, $w_2$ are 
smooth functions, since the denominator in~\eqref{wi} is strictly positive for all values of $\ell$ and $s$.

For our analysis it is necessary to recast the Einstein equations~\eqref{einstein}
into a different form.
We use the dominant variable
\begin{subequations}\label{newvar}
\begin{equation}
D=\sqrt{H^2+\frac{1}{3}\,g^{22}} = \sqrt{\frac{1}{9} \,(\tr k)^2 + \frac{1}{3} \,g^{22}}\:
\end{equation}
to define normalized dimensionless variables
according to 
\begin{equation}
H_D=\frac{H}{D}\:,\qquad
\Sigma_+=\frac{k^1_{\ 1}-k^2_{\ 2}}{3D}\:,
\qquad 
M_1=\frac{1}{D}\frac{g^{22}}{\sqrt{g^{11}}}\:,
\qquad
\Omega=\frac{\rho}{3D^2}\:.
\end{equation}
In addition we replace the cosmological time $t$ by a rescaled time variable $\tau$ via
\begin{equation}\label{newtime}
\frac{d}{d\tau}=\big(\quad\big)^\prime =\frac{1}{D}\frac{d}{dt}\:.
\end{equation}
\end{subequations}
Rewriting the Einstein equations~\eqref{einstein} in the new variables, 
we obtain a decoupled ODE for $D$
and a system of ODEs that we call the \textit{reduced dynamical system}:
\begin{subequations}\label{dynsyst}
\begin{align}
\label{HDEq2}
H_D^\prime & = -(1-H_D^2) (q - H_D \Sigma_+) \:,\\[0.5ex]
\label{Sig+Eq}
\Sigma_+^\prime & = -(2- q) H_D\Sigma_+ - (1-H_D^2) (1-\Sigma_+^2) + \frac{1}{3}\,  M_{1}^2 
+ \Omega \,\big(w_2(\ell,s) -w_1(\ell,s) \big) \:,\\[0.5ex]
\label{M1Eq}
M_{1}^\prime & = M_{1} \big( q H_D - 4 \Sigma_+ + (1-H_D^2) \Sigma_+ \big)\:,\\[1.5ex]
\label{zeq}
\ell'&=2 H_D \ell (1-\ell )\:.
\end{align}
\end{subequations}
In the system~\eqref{dynsyst}, $q$ is the so-called 
deceleration parameter, $q=2\Sigma_+^2+\frac{1}{2}(1+3w)\Omega$;
in addition, $\Omega$ is determined from the variables $\Sigma_+$ and $M_1$ 
by the Hamiltonian constraint~\eqref{constraint},
and~\eqref{newvar} is used to express $s$ as a function of $H_D$ and $M_1$, i.e.,
\begin{equation}\label{omega}
\Omega=1-\Sigma_+^2-\frac{1}{12}M_1^2\:,\qquad
s = \Big( 2 + \frac{3 (1- H_D^2)}{M_1^2}\Big)^{-1} \:.
\end{equation}

The system~\eqref{dynsyst} is a closed system that completely describes the
dynamics of LRS Bianchi type~IX Einstein-Vlasov models.
Note that the r.h.s.\ of~\eqref{dynsyst} contains the functions~\eqref{wi}
that are determined by an integration over $f_0 = f_0(u_1,u_2,u_3)$.
This means that, in particular, the r.h.s.\ of the dynamical system~\eqref{dynsyst}
depends on the initial data $f_0$, which is an interesting feature of the problem.
A more detailed derivation of~\eqref{dynsyst} is given in~\cite{CH1}.

\begin{Remark}
The dynamical system~\eqref{dynsyst} is invariant under the discrete symmetry
\[
\tau\to -\tau\:,\quad H_D\to -H_D\:,\quad\Sigma_+\to -\Sigma_+\:.
\]
Hence the qualitative behavior of solutions in the limit $\tau\to +\infty$ mirrors the 
behavior at $\tau\to -\infty$, and we may thus restrict ourself to study the latter. 
\end{Remark}

The state space for the dynamical system~\eqref{dynsyst} is given by
\[
\mathcal{E}_{\mathrm{IX}} = \mathcal{X}_{\mathrm{IX}} \times (0,1)\:,\:
\text{ where }
\mathcal{X}_{\mathrm{IX}} = \Big\{ (H_D, \Sigma_+, M_1) \:\big|\: H_D \in (-1,1)\,,\: M_1 > 0\,,\: 
\Sigma_+^2 + \frac{1}{12} M_1^2 < 1 \Big\}\:,
\]
see Fig.~\ref{b9state}.
The set $\mathcal{E}_{\mathrm{IX}}$ is relatively compact; 
the system~\eqref{dynsyst} is smooth on $\mathcal{E}_{\mathrm{IX}}$. 
However,~\eqref{dynsyst} does not admit a regular extension to 
the entire boundary $\partial\mathcal{E}_{\mathrm{IX}}$, which is because 
the variable $s$ does not have a well-defined limit when $M_1\to 0$ and $H^2_D\to 1$ 
simultaneously. This defect will be remedied by introducing the equivalent system~\eqref{dynsyspolar}.

\begin{Remark}
Let us briefly comment on the equations describing the dynamics of cosmological models
with different matter sources.
The reduced dynamical system for perfect fluid matter is obtained from~\eqref{dynsyst} 
by formally setting \mbox{$w_1 = w_2 = w = \mathrm{const}$} in~\eqref{Sig+Eq}, 
which reflects the isotropy of the matter model
(and the assumption of a linear equation of state).
The equation for $\ell$ decouples from the 
remaining equations and the reduced dynamical system becomes the set of equations~\eqref{HDEq2}--\eqref{M1Eq} 
on the state space $\mathcal{X}_\mathrm{IX}$. 
The reduced dynamical system in the case of an ensemble of massless particles 
is characterized by a decoupling of the equation for $\ell$ as well.
This is because, in the massless case, the renormalized principal pressures
are obtained from~\eqref{wi} by formally setting $\ell=0$.
The reduced dynamical system thus becomes the set of equations~\eqref{HDEq2}--\eqref{M1Eq}.
We therefore find that $\mathcal{X}_\mathrm{IX}$ is the state space 
for both the perfect fluid case and the massless Vlasov case.
We remark that this is not so for the lower Bianchi types. As shown in~\cite{RT}, the state space for 
massless Vlasov particles has one dimension more than the state space for perfect fluids when the Bianchi 
type is I, II or III. In the Bianchi type~IX case, the state spaces are identical, 
but another difference occurs:
While the reduced dynamical system for perfect fluids admits a smooth extension to 
the boundary of $\mathcal{X}_\mathrm{IX}$, 
the Vlasov case is defective in this respect. Loosely speaking, the dynamics of Vlasov matter for massless 
particles does not live naturally in the state space $\mathcal{X}_\mathrm{IX}$.   
\end{Remark}


\section{Basic lemmas}

\label{sec:basiclemmas}


We use the system~\eqref{dynsyst} to prove two basic lemmas.

\begin{figure}[Ht]
\begin{center}
\psfrag{-1}[cc][cc][0.7][0]{$-1$}
\psfrag{1}[cc][cc][0.7][0]{$1$}
\psfrag{0}[cc][cc][0.7][0]{$-1$}
\psfrag{12}[cc][cc][0.7][0]{$1$}
\psfrag{sig}[cc][cc][0.8][0]{$\Sigma_+$}
\psfrag{m1}[cc][cc][0.8][0]{$M_1$}
\psfrag{ss}[cc][cc][1][0]{$ $}
\psfrag{v}[cc][cc][1][0]{$ $}
\psfrag{xi-}[cc][cc][1][0]{$ $}
\psfrag{xi+}[cc][cc][1][0]{$ $}
\psfrag{s}[cc][cc][0.8][0]{$H_D$}
\psfrag{l}[cc][cc][1][-40]{$ $}
\psfrag{h0}[cc][cc][0.8][-45]{$\mathbf{H_D=0}$}
\includegraphics[width=0.5\textwidth]{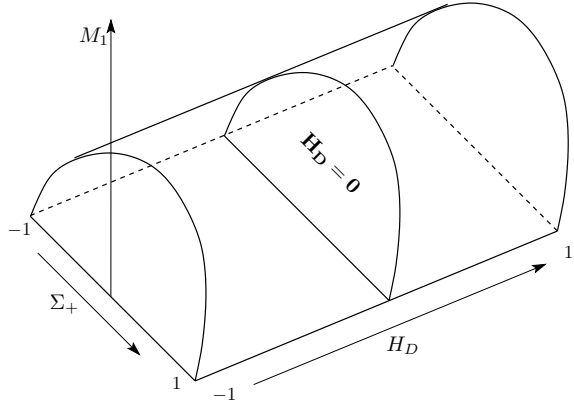}
\end{center}
\caption{The set $\mathcal{X}_\mathrm{IX}$.}
\label{b9state}
\end{figure}

\begin{Lemma}\label{HDLemma}
For every Bianchi type~IX solution with Vlasov matter
there exists $\tau_0 \in \mathbb{R}$ such that
\begin{itemize}
\item $H_D(\tau) > 0$ $\forall \tau < \tau_0$ and $H_D(\tau)$ is bounded away from zero as $\tau\rightarrow {-\infty}$;
\item $H_D(\tau_0) = 0$;
\item $H_D(\tau) < 0$ $\forall \tau > \tau_0$ and $H_D(\tau)$ is bounded away from zero as $\tau\rightarrow \infty$.
\end{itemize}
\end{Lemma}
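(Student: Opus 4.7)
The plan is to split the lemma into three tasks: transversality of $H_D$ at any of its zeros, uniqueness of the zero, and existence together with the non-degenerate behavior at $\tau\to\pm\infty$. The first two are a one-line computation from the evolution equation~\eqref{HDEq2}. Whenever $H_D(\tau)=0$ one has $H_D'=-q$ with $q=2\Sigma_+^2+\tfrac{1}{2}(1+3w)\Omega$. Two positivity facts then close this part: first, since $f_0\ge 0$ is non-trivial, $\rho>0$ along the trajectory, and the Hamiltonian constraint~\eqref{omega} gives $\Omega=1-\Sigma_+^2-M_1^2/12>0$ strictly inside $\mathcal{E}_{\mathrm{IX}}$; second, direct inspection of~\eqref{wi} shows $w_1,w_2\ge 0$ (integrands and prefactors are non-negative), hence $w=\tfrac{1}{3}(w_1+2w_2)\ge 0$ and $1+3w\ge 1$. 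Therefore $q\ge\tfrac{1}{2}\Omega>0$ and $H_D'(\tau)<0$ at every zero. Two zeros $\tau_0<\tau_1$ would then force $H_D$ to cross zero from below somewhere in $(\tau_0,\tau_1)$, violating $H_D'<0$ there, so $H_D$ vanishes at most once.

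For existence of a zero I would translate back to the cosmological time $t$ and use the Raychaudhuri-type equation. A short calculation from~\eqref{einstein}, using the Hamiltonian constraint to eliminate the Ricci scalar, yields
\[
\partial_t H=-H^2-\tfrac{2}{3}\sigma^2-\tfrac{1}{6}(\rho+3p),\qquad \sigma^2:=\tfrac{1}{3}(k^1_{\ 1}-k^2_{\ 2})^2,
\]
whose right-hand side is strictly negative since $w\ge 0$ implies $\rho+3p>0$. Hence $H$ is strictly decreasing on the maximal interval $(t_-,t_+)$. Combined with the closed-universe recollapse property of Bianchi~IX Einstein-Vlasov solutions---$(t_-,t_+)$ is finite by~\cite{R1,R2}, with $|H|\to\infty$ at both singularities, $H>0$ near the big bang and $H<0$ near the big crunch---$H$ must change sign at some $t_0\in(t_-,t_+)$, producing the required $\tau_0$.

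The bounded-away-from-zero behavior follows from a limit set argument in the compact closure $\overline{\mathcal{E}_{\mathrm{IX}}}$. If one had $H_D(\tau)\to 0^+$ as $\tau\to-\infty$, the $\alpha$-limit set $\Lambda_-$ would be a non-empty, compact, invariant subset of $\{H_D=0\}$. By the transversality established in the first step, $\Lambda_-$ cannot contain any point in the interior of $\mathcal{E}_{\mathrm{IX}}$ (the forward trajectory from such a point would leave $\{H_D=0\}$, contradicting invariance); hence $\Lambda_-\subseteq\partial\mathcal{E}_{\mathrm{IX}}\cap\{H_D=0\}$. A case analysis of the boundary strata rules out such an invariant set. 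The case $\tau\to+\infty$ follows immediately from the discrete symmetry $\tau\to-\tau$, $H_D\to-H_D$, $\Sigma_+\to-\Sigma_+$ recorded in the remark following~\eqref{dynsyst}.

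The main obstacle is these last two points. Transversality and uniqueness are algebraic, but the existence of a zero requires global Bianchi~IX recollapse information that is external to the equation for $H_D$, and the non-degeneracy at $\tau=\pm\infty$ demands control of limit sets on a boundary where the system~\eqref{dynsyst} is not regular---precisely at the stratum $\{M_1=0,\ H_D^2=1\}$ where $s$ has no well-defined limit. This lack of regularity is what motivates the polar reformulation~\eqref{dynsyspolar} introduced in the next section, which provides the natural framework for making the boundary analysis rigorous.
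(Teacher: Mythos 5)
Your overall architecture matches the paper's: the existence and sign structure of the unique zero is taken from the closed-universe recollapse result (the paper simply checks $w\ge 0$ and cites Lin--Wald~\cite{LW}; your Raychaudhuri computation is a correct alternative route to monotonicity of $H$, but the remaining input you invoke---that $H$ is positive near the big bang and negative near the big crunch---is precisely the content of that recollapse theorem, so you have not made the step more self-contained), and the ``bounded away from zero'' part is a transversality-plus-limit-set argument, as in the paper. The genuine gap is in that last part. Your transversality computation ${H_D'}_{\,|H_D=0}=-q\le -\tfrac{1}{2}\Omega$ gives strict negativity only where $\Omega>0$. Since $\alpha$-limit points may lie in the closure of the state space, you must handle the degenerate locus $\{H_D=0,\ \Sigma_+=0,\ \Omega=0\}$ (i.e.\ $M_1^2=12$), where $H_D'$ vanishes and, as it turns out, $H_D''$ vanishes as well. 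This is not a routine ``case analysis of the boundary strata'' that can be deferred: the paper must compute ${H_D^{\prime\prime\prime}}_{\,|H_D=0,\,\Omega=0,\,\Sigma_+=0}=-36<0$ to conclude that the flow still crosses $\{H_D=0\}$ from positive to negative at these points. Without this third-derivative computation (or an equivalent local analysis at that locus) your argument does not close, and this is the only nontrivial content of the step.

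Two further corrections. First, your contradiction ``the forward trajectory from such a point would leave $\{H_D=0\}$, contradicting invariance'' is not valid: the negation of boundedness away from zero only yields $\liminf_{\tau\to-\infty}H_D=0$ along a subsequence, so the $\alpha$-limit set merely meets $\{H_D=0\}$ and need not be contained in it; an orbit in the limit set that leaves the plane violates nothing. The correct contradiction, which is the one the paper uses, is that an $\alpha$-limit point $\mathrm{P}$ on $\{H_D=0\}$ through which the flow crosses from $H_D>0$ to $H_D<0$ (to first or to third order) forces, by continuity, the original orbit to change sign at a sequence of times $\tau_n\to-\infty$, contradicting $H_D(\tau)>0$ for all $\tau<\tau_0$. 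Second, your concern about the irregularity of~\eqref{dynsyst} at the stratum $\{M_1=0,\ H_D^2=1\}$ is misplaced for this lemma: that stratum is disjoint from $\{H_D=0\}$ (where $s$ extends continuously), and the paper proves the lemma entirely within the system~\eqref{dynsyst}, without the polar reformulation.
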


\begin{proof}
The average pressure $p$ of collisionless matter is non-negative; eqs.~\eqref{wiw} and~\eqref{wi}
imply \mbox{$w = p/\rho \geq 0$}. Therefore the general result by Lin and Wald~\cite{LW} applies:
Every Bianchi type~IX model (with collisionless matter) possesses an initial singularity (`big bang'),
expands initially, then reaches a time when the spatial volume is maximal, 
and finally recontracts to terminate in a singularity (`big crunch').
Hence there exists $\tau_0$ such that $H_D(\tau) > 0$ $\forall \tau < \tau_0$, $H_D(\tau_0) = 0$,
and $H_D(\tau) < 0$ $\forall \tau > \tau_0$. It remains to prove that 
there exists a positive $\epsilon$ such that $H_D(\tau) \geq \epsilon$ ($H_D(\tau)\leq {-\epsilon}$)
for all sufficiently small (large) $\tau$.
To do so assume the contrary, i.e., the existence of a solution whose
$\alpha$-limit set has a non-empty intersection with the plane $H_D = 0$.
We first note that 
\[
{H_D'}_{\,|H_D=0}=-2\Sigma_+^2-\frac{1}{2}(1+3w)\Omega\:,
\]
which is negative unless $\Omega = 0$ and $\Sigma_+ = 0$.
Second, 
\[
{H_D'}_{\,|H_D=0, \Omega = 0,\Sigma_+ = 0}=0 \:,
\qquad
{H_D^{\prime\prime}}_{\,|H_D=0, \Omega = 0,\Sigma_+ = 0}=0 \:,
\qquad
{H_D^{\prime\prime\prime}}_{\,|H_D=0, \Omega = 0,\Sigma_+ = 0} ={-36} < 0 \:.
\]
Let $\mathrm{P}$ be an $\alpha$-limit point of the solution such that $\mathrm{P}$ lies on $H_D = 0$.
Together with $\mathrm{P}$, the entire orbit through $\mathrm{P}$ is contained
in the $\alpha$-limit set; hence there exists a sequence of times $(\tau_n)_{n\in\mathbb{N}}$
with $\tau_n \rightarrow -\infty$ ($n\rightarrow \infty$)
such that $H_D(\tau_n -\delta)>0$, $H_D(\tau_n) = 0$, $H_D(\tau_n +\delta)< 0$
for a sufficiently small $\delta$;
a contradiction.
\end{proof}

\begin{Remark}
A posteriori, by Lemma~\ref{BianchiIXtheo}, we find $H_D(\tau)\rightarrow \pm 1$ as $\tau\rightarrow \mp\infty$.
A note of caution: 
There exist anisotropic matter models other than collisionless matter such
Lemma~\ref{HDLemma} is false; for such matter models there exist (typical) solutions
that do not recollapse but expand forever (i.e., $H_D(\tau) > 0$ $\forall \tau$);
we refer to~\cite{CH4}.
\end{Remark}

\begin{Lemma}\label{ellLemma}
Let\/ $\mathrm{P}$ be an $\alpha$-limit point of a Bianchi type~IX solution with Vlasov matter
as represented by an orbit of~\eqref{dynsyst}.
Then 
\[
\ell_{| \mathrm{P}} = 0 \:.
\]
\end{Lemma}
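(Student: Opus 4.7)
The approach rests entirely on the simple ODE~\eqref{zeq}, namely $\ell' = 2 H_D \ell (1-\ell)$, combined with the uniform lower bound on $H_D$ near $\tau = -\infty$ furnished by the preceding Lemma~\ref{HDLemma}.

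First I would fix $\epsilon > 0$ and $\tau_1 \leq \tau_0$ such that $H_D(\tau) \geq \epsilon$ for all $\tau \leq \tau_1$, which is the content of the ``bounded away from zero'' clause of Lemma~\ref{HDLemma}. Since $\ell$ takes values in $(0,1)$ one has $\ell(1-\ell)>0$, so equation~\eqref{zeq} gives $\ell'(\tau) \geq 2\epsilon\, \ell(\tau)(1-\ell(\tau)) > 0$ on $(-\infty, \tau_1]$. Hence $\ell$ is strictly monotone increasing in $\tau$ on this half-line, and the limit $\ell_- := \lim_{\tau \to -\infty} \ell(\tau)$ exists and lies in $[0, \ell(\tau_1))$.

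The main step is to rule out $\ell_- > 0$. Suppose for contradiction that $\ell_- > 0$. Then $\ell(\tau) \in [\ell_-, \ell(\tau_1)] \subset (0,1)$ for all $\tau \leq \tau_1$, so $\ell(1-\ell)$ admits a uniform positive lower bound, for instance $\ell_-(1-\ell(\tau_1))$. Equation~\eqref{zeq} then yields $\ell'(\tau) \geq c$ on $(-\infty,\tau_1]$ for some constant $c>0$, and integrating this inequality from $\tau$ to $\tau_1$ produces $\ell(\tau) \leq \ell(\tau_1) - c(\tau_1-\tau)$, which tends to $-\infty$ as $\tau \to -\infty$. This contradicts $\ell > 0$, so $\ell_- = 0$.

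Finally, any $\alpha$-limit point $\mathrm{P}$ is the limit of the orbit along some sequence $\tau_n \to -\infty$, and along such a sequence $\ell(\tau_n) \to \ell_- = 0$ by the previous step, giving $\ell_{|\mathrm{P}} = 0$ as claimed. The argument is essentially a one-line monotonicity observation and I expect no real obstacle; the only point requiring mild care is to invoke Lemma~\ref{HDLemma} in its \emph{uniform} form ($H_D \geq \epsilon$), rather than merely $H_D > 0$, because otherwise the monotone convergence of $\ell$ would not force the limit to vanish.
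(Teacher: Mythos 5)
Your proposal is correct and is exactly the argument the paper intends: its proof of this lemma is the one-line remark that the claim is ``a simple consequence of eq.~\eqref{zeq} for $\ell$ and Lemma~\ref{HDLemma}'', and your write-up simply supplies the details (uniform bound $H_D\geq\epsilon$, monotonicity of $\ell$, and the integration argument ruling out a positive limit). No gaps.
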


\begin{proof}
The result is a simple consequence of eq.~\eqref{zeq} for $\ell$ 
and Lemma~\ref{HDLemma}.
\end{proof}

\begin{Remark}
Lemma~\ref{ellLemma} has a straightforward physical interpretation.
Since~\eqref{dynsyst} with $\ell = 0$ describes the dynamics of solutions
of the Einstein-Vlasov equations with massless particles, see the remark
at the end of section~\ref{sec:refor}, we find that the asymptotic dynamics toward the past (and future) singularity 
in the massive Vlasov case are governed by the 
the dynamics of the massless Vlasov case.
\end{Remark}


\section{Analysis and main result}

\label{result}


By Lemma~\ref{HDLemma}, the subset $H_D > 0$ of the state space 
$\mathcal{E}_{\mathrm{IX}}$ is past-invariant
under the flow of~\eqref{dynsyst}; this makes it possible to 
transform~\eqref{dynsyst}
to a different system of equations that is equivalent to~\eqref{dynsyst} on the subset $H_D > 0$.

Let 
\begin{subequations}\label{polartransf}
\begin{align}
M_1^2 & =  3 \,r \sin \vartheta\;, \\[0.2ex]
\label{1minusHD2}
(1 - H_D^2) & = 2 \,r \cos\vartheta\;, \\[0.2ex]
\Sigma_+ & =\text{unchanged}\:,
\end{align}
\end{subequations}
where we assume that $(H_D, M_1,\Sigma_+) \in\mathcal{X}_\mathrm{IX}^+ = \mathcal{X}_\mathrm{IX} \cap \{H_D > 0\}$.
The transformation of variables from $(H_D, M_1,\Sigma_+)$ to $(r, \vartheta,\Sigma_+)$, 
where $r>0$ and $0<\vartheta< \frac{\pi}{2}$ is required,
is a diffeomorphism on the domain $(H_D, M_1,\Sigma_+) \in \mathcal{X}_\mathrm{IX}^+$.
We define the domain $\mathcal{Y}_{\mathrm{IX}}^+$ of the variables $(r,\vartheta,\Sigma_+)$ to be 
the preimage of that domain under the transformation~\eqref{polartransf}.
We obtain $r \cos\vartheta < \frac{1}{2}$ from~\eqref{1minusHD2}; the constraint
$1 - \Sigma_+^2 - \frac{1}{12}\,M_1^2 = \Omega > 0$ implies
$r\sin\vartheta < 4 (1 -\Sigma_+^2)$. Therefore, $\mathcal{Y}_{\mathrm{IX}}^+$ can be written
as 
\begin{equation}\label{Yss}
\mathcal{Y}_{\mathrm{IX}}^+ = \Big\{ r>0, \vartheta \in\big(0,\frac{\pi}{2}\big), \Sigma_+ \in (-1,1)\:\big|\:
r < \min\Big[\frac{1}{2 \cos\vartheta},\frac{4(1-\Sigma_+^2)}{\sin\vartheta}\Big] \:\Big\}\:,
\end{equation}
see Fig.~\ref{Yspace}.

\begin{figure}[Ht]\label{Y}
\begin{center}
\psfrag{sigma}[cc][cc][0.7][0]{$\Sigma_+$}
\psfrag{r}[cc][cc][0.7][0]{$r$}
\psfrag{theta}[cc][cc][0.7][0]{$\vartheta$}
\psfrag{theta}[cc][cc][0.7][0]{$\vartheta$}
\psfrag{L}[cc][cc][1][0]{\eqref{r0}}
\psfrag{B}[cc][cr][1][0]{\ \ \eqref{vartheta0}}
\psfrag{S}[cc][cc][1][0]{\ \ \eqref{varthetapi2}}
\psfrag{h0}[cl][cl][1][-45]{$H_D=0$}
\subfigure[The space $\mathcal{Y}_{\mathrm{IX}}^+$]{\includegraphics[width=0.4\textwidth]{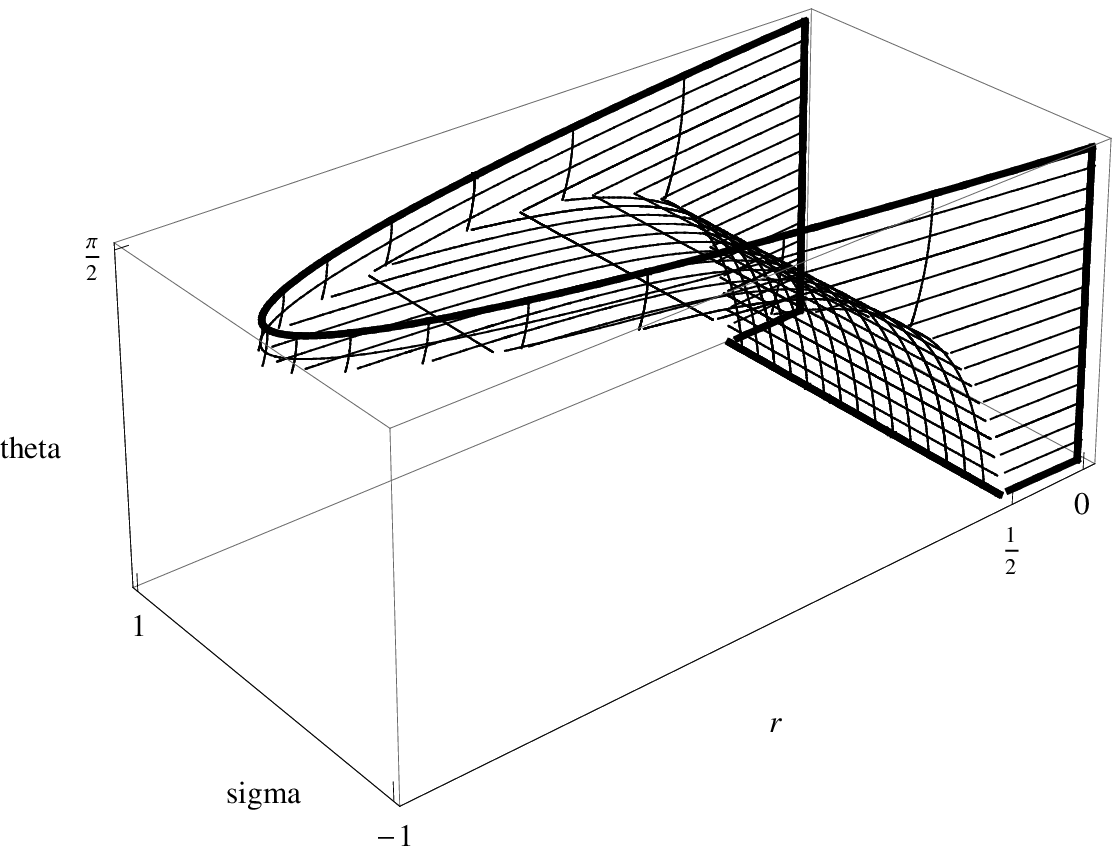}\label{Yspace}}\qquad\qquad
\subfigure[A schematic depiction of $\partial\mathcal{Y}_{\mathrm{IX}}^+$]{\includegraphics[width=0.3\textwidth]{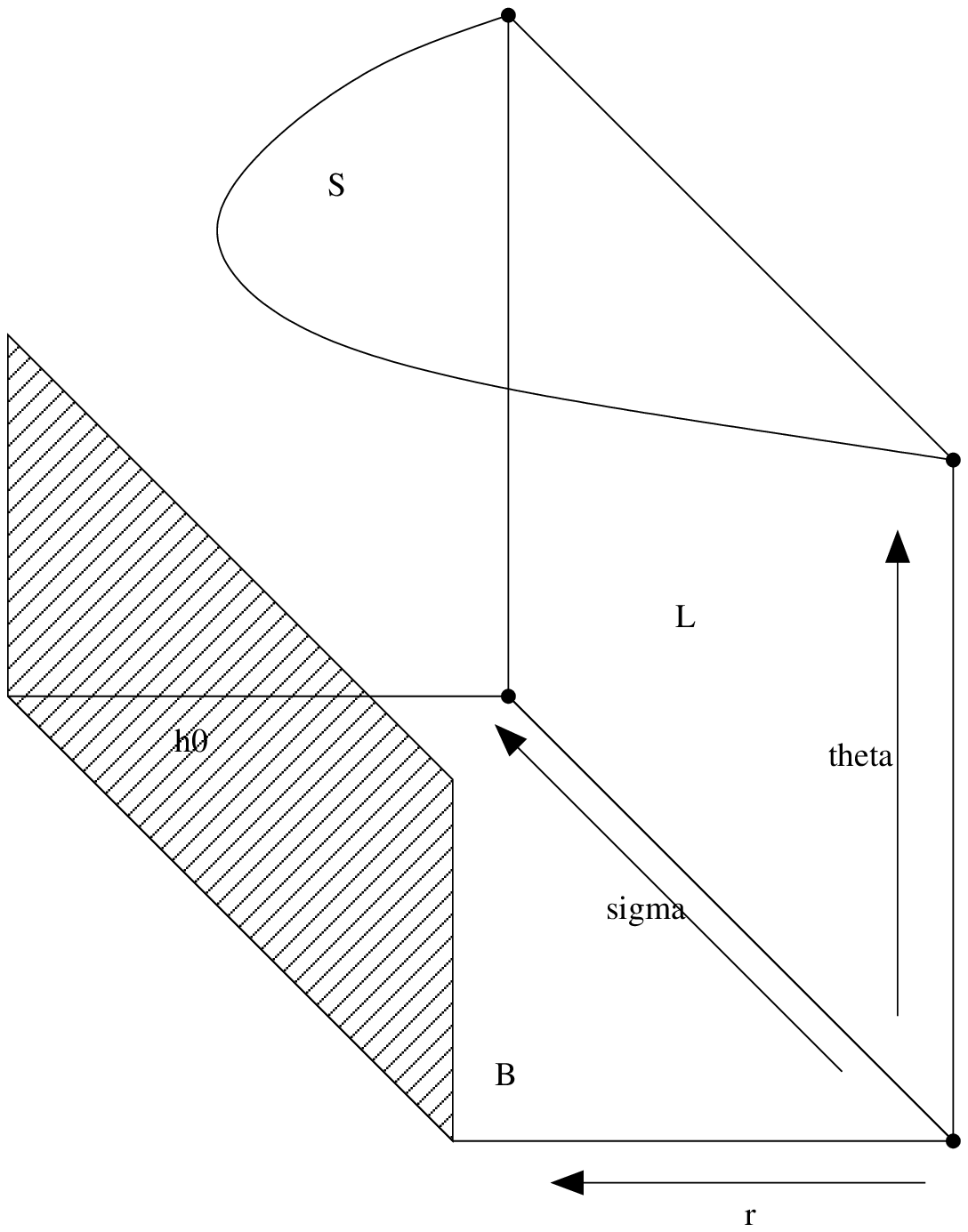}}
\caption{The space $\mathcal{Y}_{\mathrm{IX}}^+$ and its boundary. 
The various components of the boundary are defined in eqs.~\eqref{boundaryeqs}.}
\end{center}
\end{figure}

In the new variables, the dynamical system~\eqref{dynsyst} takes the form
\begin{subequations}\label{dynsyspolar}
\begin{align}
r^\prime & = 2\, r \left( H_D (q - H_D \Sigma_+) - 3 \Sigma_+ \sin^2 \vartheta \right), \\[0.5ex]
\label{polartheta}
\vartheta^\prime & = - 3 \Sigma_+ \sin (2\vartheta)\:, \\[0.5ex]
\label{polarSigma+}
\Sigma_+^\prime & = 
r \sin\vartheta -1 + (H_D -\Sigma_+)^2 + H_D \Sigma_+ (q- H_D\Sigma_+) + \Omega \,\big(w_2(\ell,s) -w_1(\ell,s) \big)\:,\\[0.8ex]
\ell'&=2 H_D \ell (1-\ell )\:.
\end{align}
where $q=2\Sigma_+^2+\frac{1}{2}(1+3w)\Omega$ and 
$\Omega =  1 - \Sigma_+^2 - \frac{1}{4}\, r \sin\vartheta$.
In addition, $H_D$ and $s$ are regarded as functions of $r$ and $\vartheta$ in~\eqref{dynsyspolar},
\begin{equation}\label{ho}
H_D  = \sqrt{1 -2 r \cos\vartheta} \;,\qquad
s=\frac{1}{2}\:\frac{\tan \vartheta}{1 + \tan \vartheta} \:.
\end{equation}
\end{subequations}
Note that these are smooth functions of $r$ and $\vartheta$ 
on the state space $\mathcal{Y}_{\mathrm{IX}}^+$. In particular,
\begin{equation*}
\frac{\partial s}{\partial \vartheta} = \frac{1}{\sin(2 \vartheta)} \,2 s (1- 2s)  = 
\frac{1}{2} \frac{1}{1+ \sin (2\vartheta)} \:,
\qquad\text{hence}\quad
\frac{1}{4} \leq \frac{\partial s}{\partial \vartheta} \leq \frac{1}{2} 
\quad\forall \vartheta \in [0,\frac{\pi}{2}]
\end{equation*}
and 
$\partial s/\partial \vartheta = \frac{1}{2}$ at $\vartheta = 0$ and $\vartheta = \frac{\pi}{2}$.

The flow of the system~\eqref{dynsyspolar} on the state space $\mathcal{Y}_{\mathrm{IX}}^+ \times (0,1)$ is
well defined in the past direction of time. This reflects the 
past invariance of the original domain $\mathcal{X}_{\mathrm{IX}}^+ \times (0,1)$.
Accordingly, an orbit of~\eqref{dynsyspolar} represents the expanding
phase of an LRS Bianchi type~IX solution with Vlasov matter;
conversely, the expanding phase of every LRS type~IX model is represented
by an orbit of~\eqref{dynsyspolar}.

In contrast to the system~\eqref{dynsyst}, 
the dynamical system~\eqref{dynsyspolar} 
on the state space $\mathcal{Y}_{\mathrm{IX}}^+ \times (0,1)$
\textit{admits a regular extension to the boundaries}. 
By the boundaries of~\eqref{Yss} we mean 
\begin{subequations}\label{boundaryeqs}
\begin{alignat}{2}
\label{r0}
& r = 0 : & \quad & \big\{ \vartheta \in\big(0,\frac{\pi}{2}\big), \Sigma_+ \in (-1,1) \big\} \:,\\
\label{vartheta0}
& \vartheta = 0 : & & \big\{ r \in \big(0,\frac{1}{2}\big), \Sigma_+ \in (-1,1) \big\} \:,\\
\label{varthetapi2}
& \vartheta = \frac{\pi}{2} : & &  \big\{ r \in \big(0,4(1-\Sigma_+^2)\big), \Sigma_+ \in (-1,1) \big\}\:, \\
\label{Omega0}
& \Omega = 0 : & & 
\Big\{ r = \min\Big[\frac{1}{2 \cos\vartheta},\frac{4(1-\Sigma_+^2)}{\sin\vartheta}\Big] = \frac{4(1-\Sigma_+^2)}{\sin\vartheta},
\vartheta \in(0,\frac{\pi}{2}), \Sigma_+ \in (-1,1) \Big\} \:,
\end{alignat}
\end{subequations}
and the closures of these invariant sets.
The set
\[
\Big\{ r = \min\Big[\frac{1}{2 \cos\vartheta},\frac{4(1-\Sigma_+^2)}{\sin\vartheta}\Big] = \frac{1}{2 \cos\vartheta},
\vartheta \in(0,\frac{\pi}{2}), \Sigma_+ \in (-1,1) \Big\} 
\]
is the preimage of the plane $H_D = 0$ under~\eqref{polartransf};
it is not an invariant set;
by Lemma~\ref{HDLemma} it is irrelevant for our considerations.
This exhausts the list of boundaries of~\eqref{Yss}.

\begin{Lemma}\label{r0thetapi2}
Let\/ $\mathrm{P}$ be an $\alpha$-limit point of an orbit of~\eqref{dynsyspolar}
(i.e., of a Bianchi type~IX solution with Vlasov matter).
Then 
\[
r_{| \mathrm{P}} = 0 \quad\text{\textnormal{or}}\quad
\vartheta_{| \mathrm{P}} = \frac{\pi}{2}\:.
\]
\end{Lemma}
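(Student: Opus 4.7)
The plan is to prove the equivalent statement that $H_D(\mathrm{P})=1$ at every $\alpha$-limit point $\mathrm{P}$. This equivalence comes from the identity $1-H_D^2=2r\cos\vartheta$ in~\eqref{ho}, which shows that $H_D=1$ holds precisely when $r=0$ or $\vartheta=\pi/2$. Combining this reformulation with Lemma~\ref{HDLemma}, which gives $H_D\geq\epsilon>0$ on the $\alpha$-limit set $\Lambda$, and Lemma~\ref{ellLemma}, which yields $\ell\equiv 0$ on $\Lambda$, the task reduces to showing that the compact, invariant set $\Lambda$ is contained in $\{H_D=1\}$ within the $\ell=0$ (massless Vlasov) subsystem of~\eqref{dynsyspolar}. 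On $\ell=0$ one has $w_1+2w_2=1$, so $w=1/3$ and $q=2\Sigma_+^2+\Omega$.

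I would argue by contradiction: suppose some $\mathrm{P}\in\Lambda$ has $H_D(\mathrm{P})<1$, and let $\gamma\subset\Lambda$ denote the orbit through $\mathrm{P}$. The central identity, obtained by differentiating $Z:=1-H_D^2$ along the flow and using~\eqref{HDEq2}, is
\[
(\log Z)^\prime \,=\, 2\,H_D\,(q-H_D\Sigma_+) \:,
\]
valid on $\gamma$ since $Z(\mathrm{P})>0$. A companion identity, immediate from~\eqref{polartheta}, is $(\log\tan\vartheta)^\prime=-6\Sigma_+$, compatible with $M_1^2/Z=(3/2)\tan\vartheta$ and with~\eqref{M1Eq}. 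The strategy is to show that $Z\to 0$ along $\gamma$ (either in the past or in the future direction), which contradicts invariance of $\Lambda$ together with $Z(\mathrm{P})>0$. If $q-H_D\Sigma_+$ could be shown to have a definite sign along $\gamma$, then $Z$ would be monotone and the argument would close at once.

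The main obstacle is precisely that $q-H_D\Sigma_+$ is not sign-definite. Completing the square on $\ell=0$ gives $q-H_D\Sigma_+=2(\Sigma_+-H_D/4)^2+\Omega-H_D^2/8$, which is non-negative whenever $\Sigma_+\leq 0$, or $\Sigma_+\geq H_D/2$, or $\Omega\geq H_D^2/8$, but may be negative in the remaining narrow regime. Consequently a direct LaSalle argument based on $Z$ alone fails. The proposed resolution is to combine the two log-derivative identities above with the $\Sigma_+$-equation~\eqref{polarSigma+} (whose matter-anisotropy term $\Omega(w_2-w_1)$ depends on $\vartheta$ only through $s$ on $\ell=0$, ranging from $-\Omega$ at $\vartheta=0$ to $\Omega/2$ at $\vartheta=\pi/2$), together with the dynamics on the invariant sets $\{r=0\}$ and $\{\vartheta=\pi/2\}$; this should produce a weighted monotone quantity sufficient to rule out any bounded orbit lying entirely in $\{\epsilon\leq H_D<1\}\cap\{\ell=0\}$.
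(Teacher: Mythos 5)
Your overall strategy---reformulating the claim as ``$H_D=1$ on the $\alpha$-limit set'' via $1-H_D^2=2r\cos\vartheta$ and then hunting for a monotone quantity built from $\log(1-H_D^2)$ and $\log\tan\vartheta$---is the right one, and your diagnosis that $(\log(1-H_D^2))'=2H_D(q-H_D\Sigma_+)$ fails to be sign-definite is accurate. But the argument stops exactly where the lemma's content begins: the ``weighted monotone quantity'' is asserted to exist (``this should produce\dots'') and never exhibited, so nothing is actually proved. There is also a flaw in the contradiction scheme as stated: showing $Z=1-H_D^2\to 0$ along an orbit $\gamma$ contained in the $\alpha$-limit set $\Lambda$ would not contradict the closedness and invariance of $\Lambda$, since $\Lambda$ may perfectly well contain points with $Z=0$ alongside points with $Z>0$ (and by the conclusion of the lemma it does). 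What is needed is the monotonicity principle applied along the \emph{original} interior orbit, not a LaSalle-type argument run inside $\Lambda$.

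The missing ingredient is within reach of your ansatz: the paper's proof uses
\[
Y \,=\, H_D^{6}\,\frac{\tan\vartheta}{r^3\cos^3\vartheta}\:,
\qquad\text{i.e.}\qquad
\log Y \,=\, 6\log H_D \,+\, \log\tan\vartheta \,-\, 3\log\bigl(\textfrac{1}{2}(1-H_D^2)\bigr)\:,
\]
so the third weight you were missing sits on $\log H_D$ itself. Combining $(\log H_D)'=-(1-H_D^2)(q-H_D\Sigma_+)/H_D$ with your two identities, the $(q-H_D\Sigma_+)$-terms and the $\Sigma_+$-terms collapse and $(\log Y)'$ comes out proportional to $-q/H_D\leq 0$ (the paper records $Y'=-12\,Y(\Sigma_+^2+\Omega)$), with equality only where $\Sigma_+=\Omega=0$; a third-derivative computation there shows the decrease is still strict. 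The monotonicity principle then excludes $\alpha$-limit points in the interior and on $\Omega=0$. Finally, note that your reduction to $H_D=1$ silently includes the invariant face $\vartheta=0$, $r>0$ of \eqref{vartheta0}, which a completed proof must address separately: the paper excludes it by observing that $Y$ vanishes there, whereas $Y$ increases toward the past along the orbit and is therefore bounded away from zero on the $\alpha$-limit set. Until the quantity $Y$ (or an equivalent one) is actually produced and these boundary cases are handled, the proposal remains a plan rather than a proof.
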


\begin{proof}
We consider the function
\[
Y = H_D^6 \frac{\tan\vartheta}{r^3 \cos^3\vartheta}\:,
\]
which is smooth and positive on $\mathcal{Y}_{\mathrm{IX}}^+$
and on the boundary subset $\Omega = 0$, see~\eqref{Omega0}.
We find that
\begin{equation}\label{Ymon}
Y^{\prime} = {-12} Y (\Sigma_+^2 + \Omega) \:,\qquad
Y^{\prime\prime\prime}_{\quad|\Sigma_+ = 0,\Omega=0} = {-24} Y \big( 3 + H_D^2 \big) \:,
\end{equation}
hence $Y$ is strictly monotonically decreasing along every orbit.
This excludes the existence of \mbox{$\alpha$-limit} points 
in $\mathcal{Y}_{\mathrm{IX}}^+$ and on the boundary $\Omega = 0$.
Suppose that there exists an orbit that possesses an $\alpha$-limit point $\mathrm{P}$
with $\vartheta_{| \mathrm{P}} = 0$ (and $r_{| \mathrm{P}} > 0$).
Then there exists a sequence of times $(\tau_n)_{n\in\mathbb{N}}$ with
$\tau_n\rightarrow {-\infty}$ ($n\rightarrow \infty$) 
such that $Y(\tau_n) \rightarrow 0$ ($n\rightarrow \infty$) along the orbit; a contradiction to~\eqref{Ymon}.
\end{proof}

Lemma~\ref{r0thetapi2} suggests to study the dynamical system
that is induced by~\eqref{dynsyspolar} on the boundaries $r=0$ and $\vartheta = \frac{\pi}{2}$,
see~\eqref{r0} and~\eqref{varthetapi2}.

\textbf{The boundary subset $\bm{r = 0}$.}
The dynamical system~\eqref{dynsyspolar} induces the system
\begin{subequations}\label{Ibou}
\begin{align}
\vartheta^\prime & =- 3 \Sigma_+ \sin(2\vartheta)\:, \\
\Sigma_+^\prime & =-(1-\Sigma_+^2)\big[\Sigma_+ - \big(w_1(0,s) - w_2(0,s)\big)\big]\:,
\end{align}
\end{subequations}
on the boundary subset represented by $r=0$ and $\ell = 0$, see~\eqref{r0}. 
From~\eqref{wi} we see that
\begin{subequations}\label{wi0s}
\begin{align}
w_1(0,s) & = (1-2s)
\,\frac{\int f_0 \,u_1^2\, \left[(1-2s)u_1^2+s(u_2^2+u_3^2)\right]^{-1/2}\,du}%
{\int f_0\left[(1-2s)u_1^2+s (u_2^2+u_3^2)\right]^{1/2}du}\:,\\[1ex]
w_2(0,s) & = s\; \frac{\int f_0 \,u_2^2 \,
\left[(1-2s)u_1^2+s(u_2^2+u_3^2)\right]^{-1/2}\,du}%
{\int f_0\left[(1-2s)u_1^2+s (u_2^2+u_3^2)\right]^{1/2}du}\:,
\end{align}
\end{subequations}
which implies that $w = \textfrac{1}{3} \big(w_1 + 2 w_2\big) = \textfrac{1}{3}$. 
Recall that in the context of~\eqref{Ibou}, $s = s(\vartheta)$, see~\eqref{ho}.

It is straightforward to see that $w_1(0,s) - w_2(0,s) = 1- 3 w_2(0,s)$ is
a strictly monotonic function~\cite{RT}. 
Since $w_2(0,0) = 0$ and $w_2\big(0,\textfrac{1}{2}\big)=\textfrac{1}{2}$, 
there is a unique value $s_0\in (0,\frac{1}{2})$ such 
that $w_2(0,s_0)=\textfrac{1}{3}$. Let $\vartheta_0$ denote the (unique) value of $\vartheta$ 
associated with $s_0$ by~\eqref{ho}. 
We conclude that
there exists a unique fixed point $\mathrm{F}$ in the interior of 
the set $r=0$, $\ell = 0$:
\begin{itemize}
\item[$\mathrm{F}$] \quad The fixed point $\mathrm{F}$ is given by $r=0$, $\vartheta = \vartheta_0$, $\Sigma_+ = 0$, $\ell = 0$.
\end{itemize}

The remaining fixed points of the system~\eqref{Ibou} are located on
the boundary of the subset $r=0$; these are given by $\vartheta = 0$,
$\vartheta = \textfrac{\pi}{2}$, and $\Omega = 0$ (with $\Sigma_+ = \pm 1$).
\begin{itemize}
\item[$\mathrm{T}_\sharp$] \quad The \textit{Taub point\/} $\mathrm{T}_\sharp$ is given by 
$r=0$, $\vartheta = \textfrac{\pi}{2}$, $\Sigma_+ = {-1}$, $\ell = 0$.
\item[$\mathrm{Q}_\sharp$] \quad The \textit{non-flat LRS point\/} $\mathrm{Q}_\sharp$ is given by 
$r=0$, $\vartheta = \textfrac{\pi}{2}$, $\Sigma_+ = 1$, $\ell = 0$.
\item[$\mathrm{R}_\sharp$] \quad The point $\mathrm{R}_\sharp$ is given by 
$r=0$, $\vartheta = \textfrac{\pi}{2}$, $\Sigma_+ = \textfrac{1}{2}$, $\ell = 0$.
\item[$\mathrm{T}_\flat$] \quad The  \textit{Taub point\/} $\mathrm{T}_\flat$ is given by 
$r=0$, $\vartheta = 0$, $\Sigma_+ = {-1}$, $\ell = 0$.
\item[$\mathrm{Q}_\flat$] \quad The \textit{non-flat LRS point\/} $\mathrm{Q}_\flat$ is given by 
$r=0$, $\vartheta = 0$, $\Sigma_+ = 1$, $\ell = 0$.
\end{itemize}

The physical \textit{interpretation} of the fixed points is 
straightforward.
At the fixed point $\mathrm{F}$,
since $w_1 = w_2 = w_3 = w = \textfrac{1}{3}$, 
the principal pressures are identical and the matter is thus isotropic.
Since $\Sigma_+ = 0$, we have $k^1_{\ 1} = k^2_{\ 2} = k^3_{\ 3}$,
i.e., the geometry is isotropic as well.
Accordingly,
the fixed point $\mathrm{F}$ represents the 
Friedmann-Lema\^itre-Robertson-Walker model, 
\begin{equation}\label{frw}
g_{11}=g_{22}=g_{33}=a\,t\qquad (a>0)\:,
\end{equation}
i.e., a spatially flat and isotropic solution of the Einstein-Vlasov system.

The fixed points $\mathrm{T}_\flat$ and $\mathrm{T}_\sharp$, as well as the orbit connecting
these two fixed points, cf.~Fig.~\ref{Ifig}, represent solutions of the Taub class, i.e.,~\eqref{taub}.
During the time that an orbit of~\eqref{dynsyst} is close to $\mathrm{T}_\flat$, $\mathrm{T}_\sharp$, or 
$\mathrm{T}_\flat \rightarrow\mathrm{T}_\sharp$, it represents a
metric~\eqref{metric} whose components are well approximated by~\eqref{taub}.
(To see this we simply use the variable transformations~\eqref{newvar} and~\eqref{polartransf}.)
Likewise, while an orbit of~\eqref{dynsyst} is close to $\mathrm{Q}_\flat$, $\mathrm{Q}_\sharp$, or 
$\mathrm{Q}_\flat \leftarrow\mathrm{Q}_\sharp$, it represents a
metric~\eqref{metric} whose components are well approximated by~\eqref{solQ}.

The solution associated to the fixed point $\mathrm{R}_\sharp$ is of type~I as well.
(This fixed point has already been found in~\cite{RT}, where it was denoted by $P_3$.)
The metric is
\[
g_{11}=a\:,\quad g_{22}=g_{33}=b\,t^{4/3} \qquad (a,b>0)\:;
\]
we refer to~\cite{CH1}. 
Note that this a non-vacuum solution and that the matter is anisotropic; hence
$\mathrm{R}_\sharp$ does not correspond to 
a perfect fluid solution.

\vspace{2ex}

\begin{Lemma}\label{ILemma}
The system~\eqref{Ibou} on the closure of the subset\/ $r = 0$ (and $\ell = 0$) gives
rise to the flow depicted in Fig.~\ref{Ifig}.
\end{Lemma}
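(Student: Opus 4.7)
I would begin by identifying the invariant sets and fixed points. The four edges of the rectangle $[0,\pi/2]\times[-1,1]$ are invariant under~\eqref{Ibou} since $\vartheta'$ vanishes when $\sin(2\vartheta)=0$ and $\Sigma_+'$ vanishes when $(1-\Sigma_+^2)=0$. On the $\ell=0$ subset, collisionless matter satisfies $w=\frac{1}{3}$, i.e.\ $w_1(0,s)+2w_2(0,s)=1$, so the $\Sigma_+'$-nullcline in the interior is the graph of a function $\Sigma_+=\phi(\vartheta)$ built from $w_1(0,s(\vartheta))-w_2(0,s(\vartheta))=1-3w_2(0,s(\vartheta))$. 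Since $s(\vartheta)$ is strictly monotonic and $w_2(0,\cdot)$ is strictly monotonic (as shown in~\cite{RT}), $\phi$ is strictly monotonic on $[0,\pi/2]$ and crosses $\Sigma_+=0$ at a unique $\vartheta_0$, yielding the unique interior fixed point $\mathrm{F}$. The remaining fixed points lie on the boundary edges and are read off from the roots of the scalar ODEs obtained by restricting~\eqref{Ibou}.

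Next I would determine the flow on each of the four invariant edges by analyzing the corresponding one-dimensional ODE. The signs of $\Sigma_+'$ on $\vartheta\in\{0,\pi/2\}$ and of $\vartheta'$ on $\Sigma_+=\pm 1$ are definite between consecutive fixed points, producing the five boundary heteroclinic orbits depicted in Fig.~\ref{Ifig}: $\mathrm{Q}_\sharp\to \mathrm{Q}_\flat$ and $\mathrm{T}_\flat\to \mathrm{T}_\sharp$ on $\Sigma_+=\pm 1$; a single orbit between $\mathrm{Q}_\flat$ and $\mathrm{T}_\flat$ on $\vartheta=0$; and $\mathrm{T}_\sharp\to \mathrm{R}_\sharp\leftarrow \mathrm{Q}_\sharp$ on $\vartheta=\pi/2$. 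Standard linearization at $\mathrm{F}$ then shows that its Jacobian has trace $-1$ and determinant proportional to $\phi'(\vartheta_0)\sin(2\vartheta_0)>0$, so $\mathrm{F}$ is an asymptotically stable node or spiral.

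To settle the interior dynamics I would apply the Dulac criterion with the weight
\begin{equation*}
g(\vartheta,\Sigma_+)=\frac{1}{(1-\Sigma_+^2)\sin(2\vartheta)},
\end{equation*}
which is smooth and positive in the open rectangle. Because $g\vartheta'=-3\Sigma_+/(1-\Sigma_+^2)$ is independent of $\vartheta$ and $g\Sigma_+'$ is affine in $\Sigma_+$ with $\Sigma_+$-derivative $-1/\sin(2\vartheta)$, the divergence is
\begin{equation*}
\frac{\partial(g\vartheta')}{\partial\vartheta}+\frac{\partial(g\Sigma_+')}{\partial\Sigma_+}=-\frac{1}{\sin(2\vartheta)}<0,
\end{equation*}
ruling out periodic orbits in the interior. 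The boundary heteroclinic orbits form an open chain that terminates at the attractor $\mathrm{R}_\sharp$ on $\vartheta=\pi/2$, so no heteroclinic cycle exists either. Poincar\'e--Bendixson then forces the $\omega$-limit set of every interior orbit to be a single fixed point, which must be $\mathrm{F}$ (the only interior sink). Tracing the one-dimensional unstable manifolds of the boundary saddles $\mathrm{Q}_\flat$, $\mathrm{T}_\sharp$, $\mathrm{R}_\sharp$ into the interior then reproduces all connections of Fig.~\ref{Ifig}.

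The main technical difficulty is the non-hyperbolic character of $\mathrm{T}_\flat$ and $\mathrm{Q}_\flat$: at these points the nullcline $\Sigma_+=\phi(\vartheta)$ is tangent to the boundary $\vartheta=0$, so the Jacobian acquires a zero eigenvalue in the $\Sigma_+$-direction and the leading behavior is quadratic, of the form $\Sigma_+'\sim-2(1\pm\Sigma_+)\bigl((1\pm\Sigma_+)-\phi'(0)\vartheta\bigr)$. A center-manifold analysis (or equivalently a direct blow-up in the two small variables) is required to confirm that $\mathrm{T}_\flat$ acts as a source into the interior while $\mathrm{Q}_\flat$ behaves as a saddle whose stable manifold is transverse to $\vartheta=0$; once that is established, together with the Dulac and Poincar\'e--Bendixson arguments above, the phase portrait claimed in Fig.~\ref{Ifig} follows.
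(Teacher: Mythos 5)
Your overall strategy is sound and the phase portrait you arrive at is the correct one, but your route is genuinely different from the paper's. The paper's proof rests on a single explicit monotone function,
\[
Z=(1-\Sigma_+^2)^{-1}\big((1-2s)s^2\big)^{-1/6}\int f_0\left[(1-2s)u_1^2+s(u_2^2+u_3^2)\right]^{1/2}du\,,
\]
which satisfies $Z'=-2Z\Sigma_+^2$ and is strictly decreasing along every orbit except at $\mathrm{F}$ (the higher derivatives at $\Sigma_+=0$ vanish only where $w_2(0,s)=\frac{1}{3}$). This one function simultaneously excludes interior periodic orbits and graphics, forces the $\omega$-limit set of every interior orbit to be $\mathrm{F}$, and pushes all $\alpha$-limit points to the boundary; the rest is the one-dimensional edge analysis you also perform. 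Your Dulac weight $g=\big((1-\Sigma_+^2)\sin(2\vartheta)\big)^{-1}$ is correct (the divergence computation checks out), and combined with Poincar\'e--Bendixson it yields the same conclusion, but it buys less: you must separately (i) rule out graphics, which is delicate because $g$ blows up precisely on the edges through which any candidate graphic would pass, so the naive Green's-theorem argument does not immediately apply there, and (ii) resolve the non-hyperbolic corner points $\mathrm{T}_\flat$, $\mathrm{Q}_\flat$ by center-manifold or blow-up methods --- a step you correctly identify but leave open. The paper's monotone function sidesteps both issues, which is exactly what it is designed for.

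Two smaller points. First, there is a sign inconsistency between~\eqref{Ibou} as printed and what one obtains by restricting~\eqref{polarSigma+} to $r=0$ (and by comparison with~\eqref{flowSsharp}): the anisotropy term should read $w_2(0,s)-w_1(0,s)$, not $w_1(0,s)-w_2(0,s)$. You took the printed sign at face value, so your nullcline $\phi=1-3w_2$ is \emph{decreasing}, which makes your claim that the determinant of the Jacobian at $\mathrm{F}$ is ``proportional to $\phi'(\vartheta_0)\sin(2\vartheta_0)>0$'' internally inconsistent; with the corrected sign ($\phi=3w_2-1$, increasing) the determinant is indeed positive and $\mathrm{F}$ is a sink, as your conclusion and Fig.~\ref{Ifig} require. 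Second, $\mathrm{R}_\sharp$ is not an attractor of the two-dimensional system~\eqref{Ibou} --- it attracts only within the edge $\vartheta=\frac{\pi}{2}$ and repels into the interior along the orbit $\mathrm{R}_\sharp\rightarrow\mathrm{F}$ --- so your phrase ``terminates at the attractor $\mathrm{R}_\sharp$'' should be read as a statement about the edge dynamics only; the exclusion of graphics ultimately uses that the only orbit leaving $\mathrm{R}_\sharp$ into the interior ends at the sink $\mathrm{F}$, which cannot belong to a nontrivial graphic.
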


\begin{proof}
Consider the function 
\[
Z=(1-\Sigma_+^2)^{-1} \big((1-2s)s^2)\big)^{-1/6}\int f_0 \left[(1-2s)u_1^2+s\big(u_2^2+u_3^2\big)\right]^{1/2}dv\:,
\]
which satisfies
\[
Z'=-2 Z \Sigma_+^2\:,\qquad 
Z'''_{\;\:|\Sigma_+=0}=-36 Z \big( w_2(0,s)-\textfrac{1}{3}\big)^2 \:.
\]
Since $Z$ is strictly monotonically decreasing except at $\mathrm{F}$,
we conclude that $\mathrm{F}$ is the $\omega$-limit set of every interior orbit, 
while the $\alpha$-limit set is contained on the boundary.
The straightforward analysis of the flow on the one-dimensional boundaries leads to the claim of
the lemma.
\end{proof}

\begin{figure}[Ht!]
\begin{center}
\psfrag{T}[cc][cc][1][0]{$\mathrm{T}_\flat$}
\psfrag{T*}[cc][cc][1][0]{$\mathrm{T}_\sharp$}
\psfrag{Q}[cc][cc][1][0]{$\mathrm{Q}_\flat$}
\psfrag{Q*}[cc][cc][1][0]{$\mathrm{Q}_\sharp$}
\psfrag{D*}[cc][cr][1][0]{$\mathrm{R}_\sharp$}
\psfrag{F}[cc][cc][1][0]{$\mathrm{F}$}
\psfrag{th}[cc][cc][0.7][0]{$\vartheta$}
\psfrag{sig}[ll][rr][0.7][0]{$\Sigma_+$}
\includegraphics[width=0.5\textwidth]{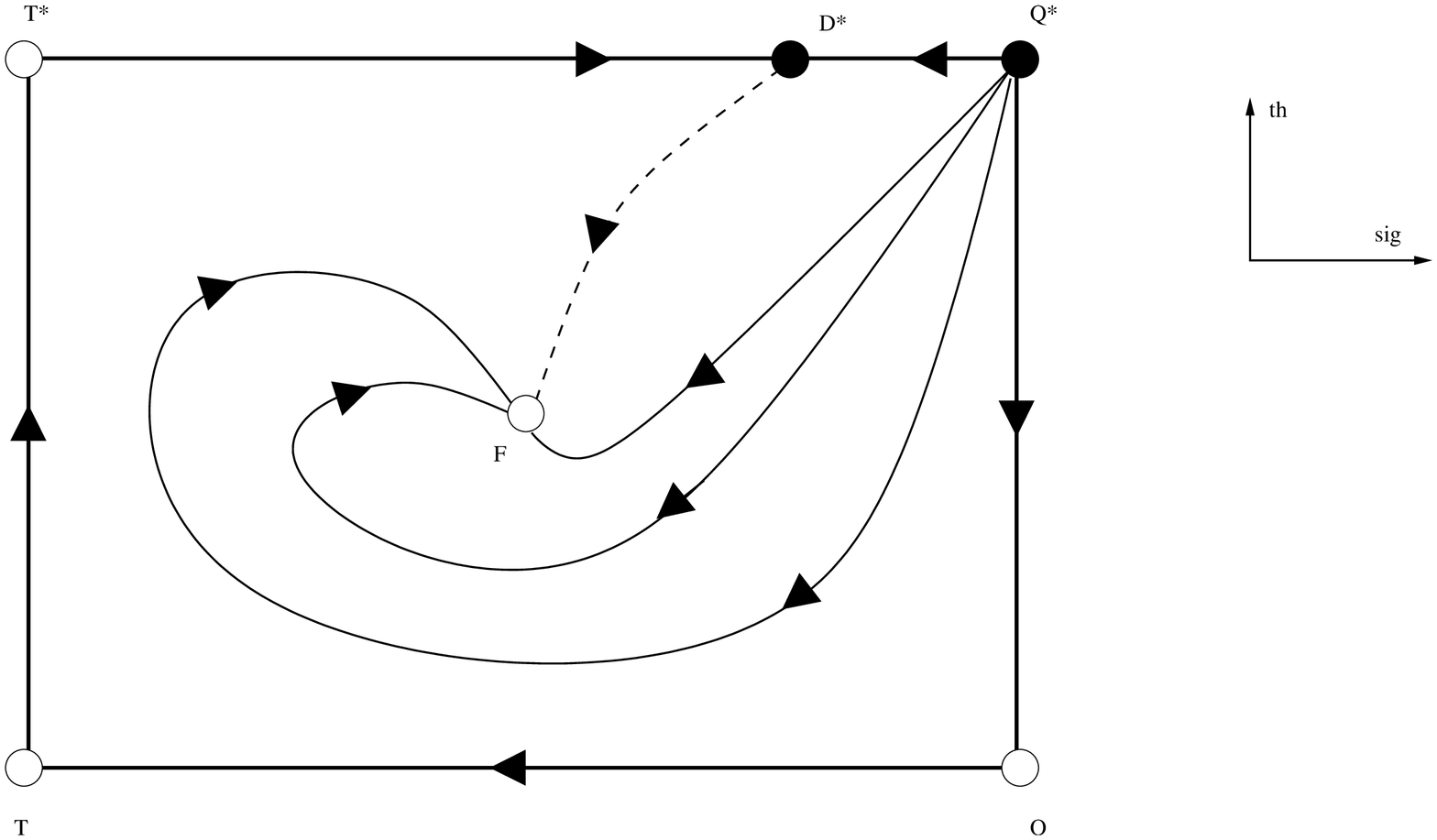}
\end{center}
\caption{The flow induced on the two-dimensional boundary subset determined by $r =0$ 
and $\ell = 0$. The color-coding of the fixed points has the following meaning: A 
fixed point depicted in black (resp.\ white) is an attractor (resp.\ repeller) 
of orbits in a transversal direction (within the set $\ell = 0$).}
\label{Ifig}
\end{figure}

\textbf{The boundary subset $\bm{\vartheta = \textfrac{\pi}{2}}$.}
The dynamical system~\eqref{dynsyspolar} induces the system
\begin{subequations}\label{flowSsharp}
\begin{align}
r'&=2r\big(\Sigma_+^2-4\Sigma_++1-\frac{r}{4}\big)\:, \\
\Sigma'_+&=\frac{r}{2} (2-\Sigma_+)- \big(1-\Sigma_+^2-\frac{r}{4}\big) \big(\Sigma_+-\frac{1}{2}\big)\:,
\end{align}
\end{subequations}
on the boundary subset represented by $\vartheta=\textfrac{\pi}{2}$ and $\ell = 0$, see~\eqref{varthetapi2}. 
Here we have used that $s=\textfrac{1}{2}$ when $\vartheta = \textfrac{\pi}{2}$, see~\eqref{ho}, which implies that 
$w_1(0,s) = 0$ and $w_2(0,s) = \textfrac{1}{2}$, see~\eqref{wi0s}.

The one-dimensional boundary of the system~\eqref{flowSsharp} consists of a part where $\Omega = 0$
and a part $r = 0$; on the latter, the systems~\eqref{Ibou} and~\eqref{flowSsharp} intersect.

\begin{Lemma}\label{IILemma}
The system~\eqref{flowSsharp} on the closure of the subset\/ $\vartheta=\textfrac{\pi}{2}$ (and $\ell = 0$) 
gives rise to the flow depicted in Fig.~\ref{IIfig}.
\end{Lemma}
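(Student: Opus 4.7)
The plan is to treat~\eqref{flowSsharp} as a planar system on the closure of the set $\{\vartheta = \pi/2,\; \ell = 0\}$ and determine the global phase portrait by enumerating fixed points, linearising at each, and then applying a Poincar\'e--Bendixson argument once periodic orbits are excluded. First I would catalogue the boundary and the induced 1D flow: on $r=0$, \eqref{flowSsharp} reduces to $\Sigma_+' = -(1-\Sigma_+^2)(\Sigma_+ - \tfrac{1}{2})$, so the fixed points are the ones $\mathrm{T}_\sharp,\mathrm{R}_\sharp,\mathrm{Q}_\sharp$ inherited from Lemma~\ref{ILemma} and the flow goes $\mathrm{T}_\sharp\to\mathrm{R}_\sharp\leftarrow\mathrm{Q}_\sharp$; on the vacuum arc $\Omega=0$, i.e.\ $r=4(1-\Sigma_+^2)$, direct substitution gives $\Sigma_+' = 2(1-\Sigma_+^2)(2-\Sigma_+)>0$, so this arc is a single heteroclinic orbit from $\mathrm{T}_\sharp$ to $\mathrm{Q}_\sharp$ and carries no additional fixed points.

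For interior fixed points, $r'=0$ with $r>0$ forces $r = 4(\Sigma_+^2 - 4\Sigma_++1)$; substituting into $\Sigma_+'=0$ and factoring yields $(2-\Sigma_+)(1-\tfrac{7}{2}\Sigma_+)=0$, whose only admissible root $\Sigma_+=2/7$ gives $r=-12/49<0$. Hence the 2D interior contains no fixed points, and the catalogue is complete. Linearising, I would read off from the Jacobians that $\mathrm{T}_\sharp$ is a source (eigenvalues $12$ and $3$), that $\mathrm{Q}_\sharp$ is a saddle (eigenvalues $-4$ and $+1$, with the unstable direction tangent to $r=0$ and the stable direction pointing into the interior), and that $\mathrm{R}_\sharp$ is a sink (eigenvalues $-3/2$ and $-3/4$).

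The remaining and main step is to rule out periodic orbits, and this is where I expect the principal obstacle: since a limit cycle could encircle the boundary sink $\mathrm{R}_\sharp$, a purely index-theoretic argument fails. The natural Dulac weight $\phi = 1/r$ yields a divergence $-\tfrac{3}{4} + (3\Sigma_+^2-\Sigma_+-1)/r$ that is only sign-definite on a proper subregion, so a more refined weight---plausibly of the form $\phi = r^{-a}\Omega^{-b}$---has to be chosen; alternatively one may exploit the identity $\Omega' = \Omega\bigl[2\Sigma_+(\Sigma_+-\tfrac{1}{2}) - r/2\bigr]$, which is strictly negative throughout the strip $\Sigma_+\in[0,\tfrac{1}{2}]$, to confine and ultimately exclude closed orbits in the spirit of the monotone-function arguments used in Lemmas~\ref{ILemma} and~\ref{r0thetapi2}. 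Once limit cycles are ruled out, Poincar\'e--Bendixson forces the $\omega$-limit of every interior orbit to be $\mathrm{R}_\sharp$ or $\mathrm{Q}_\sharp$ (the latter precisely on the 1D stable manifold of the saddle), and since $\mathrm{T}_\sharp$ is the unique source each $\alpha$-limit is $\mathrm{T}_\sharp$. The stable manifold of $\mathrm{Q}_\sharp$ is then the unique interior heteroclinic orbit $\mathrm{T}_\sharp\to\mathrm{Q}_\sharp$; together with the vacuum heteroclinic $\mathrm{T}_\sharp\to\mathrm{Q}_\sharp$ on $\Omega=0$ and the edge orbits $\mathrm{T}_\sharp\to\mathrm{R}_\sharp\leftarrow\mathrm{Q}_\sharp$ on $r=0$, this gives the flow depicted in Fig.~\ref{IIfig}.
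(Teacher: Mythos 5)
Your strategy---catalogue the boundary fixed points, verify there are no interior ones, linearise, exclude periodic orbits, and invoke Poincar\'e--Bendixson---is precisely the paper's (very terse) argument, and your computations (the factorisation $2(2-\Sigma_+)(1-\tfrac{7}{2}\Sigma_+)$ for interior equilibria, the identity $\Omega'=\Omega\bigl[2\Sigma_+(\Sigma_+-\tfrac12)-\tfrac{r}{2}\bigr]$, the eigenvalues $\{12,3\}$, $\{-4,1\}$, $\{-\tfrac32,-\tfrac34\}$ at $\mathrm{T}_\sharp$, $\mathrm{Q}_\sharp$, $\mathrm{R}_\sharp$) all check out. However, the one step you flag as the principal obstacle---ruling out limit cycles---you leave unresolved, and your reason for discarding the easy argument is itself mistaken: the state space $\{0<r<4(1-\Sigma_+^2)\}$ is an open, \emph{simply connected} subset of the plane, so a Jordan curve contained in it bounds a disc that is again contained in it; a periodic orbit therefore cannot encircle the boundary point $\mathrm{R}_\sharp$ (which has $r=0$) and must enclose an interior fixed point, of which there are none. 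Index theory thus disposes of periodic orbits at once, and no Dulac weight or confinement argument is needed.

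The second, more substantive, problem is that your final phase portrait is wrong. The vacuum arc $\Omega=0$, i.e.\ $r=4(1-\Sigma_+^2)$, is invariant, flows into $\mathrm{Q}_\sharp$, and meets $\mathrm{Q}_\sharp$ tangentially to the stable eigendirection: the Jacobian at $\mathrm{Q}_\sharp$ is $\bigl(\begin{smallmatrix} -4 & 0\\ 5/8 & 1\end{smallmatrix}\bigr)$ with stable eigenvector $(1,-\tfrac18)$ in the $(r,\Sigma_+)$ coordinates, while the arc satisfies $dr/d\Sigma_+=-8\Sigma_+=-8$ at $\Sigma_+=1$. By uniqueness of the stable manifold of a hyperbolic saddle, the branch of $W^s(\mathrm{Q}_\sharp)$ contained in $\{r\ge 0\}$ \emph{is} the vacuum arc (the other branch lies in $r<0$). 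Consequently there is no separate ``unique interior heteroclinic orbit $\mathrm{T}_\sharp\to\mathrm{Q}_\sharp$'': no interior orbit converges to $\mathrm{Q}_\sharp$, and every interior orbit has $\alpha$-limit $\mathrm{T}_\sharp$ and $\omega$-limit $\mathrm{R}_\sharp$. This is not a cosmetic point, since the subsequent argument (Lemma~\ref{novarthetapi2Lemma}) relies on $\mathrm{R}_\sharp$ being the $\omega$-limit of \emph{every} orbit in the interior of this boundary subset.
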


\begin{proof}
The absence of fixed points, periodic orbits and heteroclinic cycles in the interior of this set
implies, by the Poincar\'e-Bendixson theorem~\cite{perko}, 
that the $\alpha$- and $\omega$-limit points of orbits must be located on the boundary. 
A local dynamical systems analysis then yields the claim of the lemma.
\end{proof}

\begin{figure}[Ht]
\begin{center}
\psfrag{s+}[cc][cc][0.8][0]{$\Sigma_+$}
\psfrag{T}[cc][cc][1][0]{$\mathrm{T}_\sharp$}
\psfrag{Q}[cc][cc][1][0]{$\mathrm{Q}_\sharp$}
\psfrag{D}[cc][cc][1][0]{$\mathrm{R}_\sharp$}
\psfrag{r}[cc][cc][0.8][0]{$r$}
\includegraphics[width=0.7\textwidth]{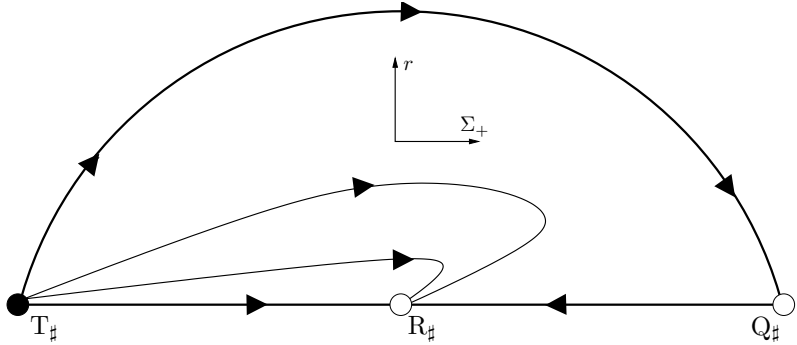}
\end{center}
\caption{The flow induced by~\eqref{dynsyspolar} on the two-dimensional boundary subset determined by $\vartheta =\frac{\pi}{2}$ 
and $\ell = 0$.}
\label{IIfig}
\end{figure}

This completes our analysis of the boundary subsets.
Let us return to the study of~\eqref{dynsyspolar}.

\begin{Lemma}\label{FLemma}
There exists a one-parameter family of orbits of~\eqref{dynsyspolar}
whose $\alpha$-limit set is the fixed point\/ $\mathrm{F}$.
\end{Lemma}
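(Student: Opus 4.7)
The plan is to show that $\mathrm{F}$ is a hyperbolic equilibrium of~\eqref{dynsyspolar} with a two-dimensional unstable manifold $W^{\mathrm{u}}(\mathrm{F})$, and to identify the sought one-parameter family with $W^{\mathrm{u}}(\mathrm{F})\setminus\{\mathrm{F}\}$. Because $\{r=0\}$ and $\{\ell=0\}$ are invariant sets (the right-hand sides of $r^\prime$ and $\ell^\prime$ carry explicit factors of $r$ and $\ell$), the Jacobian at $\mathrm{F}$, ordered as $(\vartheta,\Sigma_+,r,\ell)$, is block upper-triangular, so its spectrum splits into two contributions: the intrinsic $(\vartheta,\Sigma_+)$ block, which governs the flow on the invariant set $\{r=0,\ell=0\}$ studied in Lemma~\ref{ILemma}, and the transverse $(r,\ell)$ block.

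First I would evaluate the transverse block. At $\mathrm{F}$ we have $H_D=1$, $\Sigma_+=0$, $\Omega=1$, and—since the matter is isotropic at the fixed point—$w=\textfrac{1}{3}$, hence $q=1$. Inserting these values into~\eqref{dynsyspolar} produces a diagonal transverse block with both entries equal to $+2$; these are the two unstable eigenvalues. Next I would linearize~\eqref{Ibou} at $(\vartheta_0,0)$, using the defining relation $(w_1-w_2)(0,s_0)=0$. This yields a $2\times 2$ matrix with trace $-1$ and determinant proportional to $\sin(2\vartheta_0)\,\partial_\vartheta(w_1-w_2)(0,s_0)$. Invoking the strict monotonicity of $w_2(0,\,\cdot\,)$ (the same monotonicity used in the proof of Lemma~\ref{ILemma} and established in~\cite{RT}) together with $\partial s/\partial\vartheta>0$, this determinant is strictly positive. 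Hence both intrinsic eigenvalues have strictly negative real parts, and $\mathrm{F}$ is hyperbolic with exactly two unstable directions.

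The unstable manifold theorem then supplies a two-dimensional $C^\infty$ invariant manifold $W^{\mathrm{u}}(\mathrm{F})$, tangent at $\mathrm{F}$ to the unstable eigenspace and consisting entirely of orbits whose $\alpha$-limit set is $\{\mathrm{F}\}$. Because the unstable eigenspace is transverse both to $\{r=0\}$ and to $\{\ell=0\}$, a sufficiently small neighborhood of $\mathrm{F}$ inside $W^{\mathrm{u}}(\mathrm{F})$ lies in the physical interior $\mathcal{Y}^+_{\mathrm{IX}}\times(0,1)$. Removing $\mathrm{F}$ itself, the orbits foliating $W^{\mathrm{u}}(\mathrm{F})\setminus\{\mathrm{F}\}$ constitute the desired one-parameter family.

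The main obstacle will be verifying hyperbolicity of the intrinsic $(\vartheta,\Sigma_+)$ block, i.e., the non-vanishing (with the correct sign) of $\partial_\vartheta(w_1-w_2)(0,s_0)=-3\,w_2^\prime(s_0)\,s^\prime(\vartheta_0)$, a quantity expressed through the non-explicit momentum integrals~\eqref{wi0s} evaluated at the single value $s_0$. Under the standing assumptions on $f_0$ (split support, reflection symmetry) the analysis in~\cite{RT} guarantees strict monotonicity of $w_2(0,\,\cdot\,)$, ruling out any degeneracy. Were this derivative to vanish, one would have to replace the unstable manifold theorem by a centre manifold argument—anchored by the strictly decreasing Lyapunov function $Z$ of Lemma~\ref{ILemma}—but since the transverse $(r,\ell)$ directions would remain hyperbolic a one-parameter family with $\alpha$-limit $\mathrm{F}$ would still emerge, only the proof would be more delicate.
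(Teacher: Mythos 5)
Your overall strategy coincides with the paper's proof, which consists precisely of the remark that a ``simple calculation shows that the unstable manifold of $\mathrm{F}$ is two-dimensional'' (with the stable directions accounted for by Lemma~\ref{ILemma}); your block-triangular splitting of the Jacobian, the transverse eigenvalues $+2,+2$ in the $r$- and $\ell$-directions, and the final conclusion are all correct.

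There is, however, a sign inconsistency in your treatment of the intrinsic $(\vartheta,\Sigma_+)$ block that must be repaired, because as written your own formulas give the wrong answer. You state that the determinant is proportional to $\sin(2\vartheta_0)\,\partial_\vartheta(w_1-w_2)(0,s_0)$ and that it is positive, but you also (correctly) record that $\partial_\vartheta(w_1-w_2)=-3\,w_2'(s_0)\,s'(\vartheta_0)<0$; taken together these give a \emph{negative} determinant, i.e., a saddle with an extra unstable direction, hence a two-parameter family---contradicting Lemma~\ref{ILemma} and weakening the statement you are trying to prove. The source of the trouble is that~\eqref{Ibou} as printed is not consistent with~\eqref{dynsyspolar}: restricting~\eqref{polarSigma+} to $r=0$, $\ell=0$ yields $\Sigma_+'=-(1-\Sigma_+^2)\bigl[\Sigma_+-\bigl(w_2(0,s)-w_1(0,s)\bigr)\bigr]$, i.e., the opposite sign in front of $w_1-w_2$; this is also what~\eqref{flowSsharp} gives at $r=0$, and it is the sign required to place $\mathrm{R}_\sharp$ at $\Sigma_+=+\tfrac{1}{2}$ rather than $-\tfrac{1}{2}$. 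With the correct sign the entry $\partial\Sigma_+'/\partial\vartheta$ flips, the determinant becomes $-3\sin(2\vartheta_0)\,\partial_\vartheta(w_1-w_2)(0,s_0)=9\sin(2\vartheta_0)\,w_2'(s_0)\,s'(\vartheta_0)>0$, and with trace $-1$ the intrinsic block is indeed stable, so your conclusion stands. One further small point: not every orbit of $W^{\mathrm{u}}(\mathrm{F})\setminus\{\mathrm{F}\}$ lies in the physical interior, since $W^{\mathrm{u}}(\mathrm{F})$ meets the invariant boundary sets $\{r=0\}$ and $\{\ell=0\}$ in boundary orbits; but the subfamily with $r>0$ and $\ell>0$ is nonempty and open in the family, which is all the lemma requires.
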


\begin{proof}
A simple calculation shows that the unstable manifold of $\mathrm{F}$ is
two-dimensional. (See Lemma~\ref{ILemma} for the stable manifold.)
\end{proof}

\begin{Lemma}\label{nor0Lemma}
Assume that a point\/ $\mathrm{P}$ 
of the (interior of the) boundary subset $r = 0$, $\ell = 0$, see~\eqref{r0},
is an $\alpha$-limit point of 
an orbit of~\eqref{dynsyspolar}.
Then\/ $\mathrm{P} = \mathrm{F}$.
\end{Lemma}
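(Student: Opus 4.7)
The plan is a proof by contradiction: I suppose $\mathrm{P}\neq\mathrm{F}$. Both the $\alpha$-limit set of the orbit and the subset $\{r=0,\ell=0\}$ are invariant under \eqref{dynsyspolar} (the latter because $r^\prime\propto r$ and $\ell^\prime\propto\ell$), so the orbit through $\mathrm{P}$ under the reduced flow \eqref{Ibou} lies in the $\alpha$-limit set. By Lemma~\ref{ILemma} this reduced orbit converges forward to $\mathrm{F}$, so the closedness of the $\alpha$-limit set forces $\mathrm{F}$ to belong to it as well.

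Next I would linearize \eqref{dynsyspolar} at $\mathrm{F}$. A direct computation shows that $r$ and $\ell$ decouple and are each expanding directions with eigenvalue $+2$, while the reduced system \eqref{Ibou} contributes two negative eigenvalues (by Lemma~\ref{ILemma}). Hence the local stable manifold of $\mathrm{F}$ coincides with $\{r=0,\ell=0\}$ and the local unstable manifold is two-dimensional and tangent to the $(r,\ell)$-plane. Since the orbit under study has $\ell(\tau)>0$ for all finite $\tau$, it does not lie on the stable manifold of $\mathrm{F}$.

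To derive the contradiction I would exploit the strict monotonicity $\ell^\prime=2H_D\ell(1-\ell)>0$ (provided by Lemma~\ref{HDLemma}) together with the exponential expansion of $\ell$ near $\mathrm{F}$. Since $\mathrm{F}$ lies in the $\alpha$-limit set but the orbit does not converge to $\mathrm{F}$ (otherwise $\mathrm{P}=\mathrm{F}$), there is a sequence $\sigma_n\to-\infty$ of visits to a fixed small neighborhood of $\mathrm{F}$ from which the orbit subsequently exits; at each exit $\ell$ has grown by a definite positive factor, and monotonicity then prevents $\ell$ from being arbitrarily small at later visits, contradicting $\ell(\sigma_n)\to 0$.

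The hardest step, and the real obstacle, is making this last argument quantitative: a priori the orbit could exit the neighborhood of $\mathrm{F}$ primarily along the $r$-direction (with $\ell$ remaining very small) and then return. Ruling this out requires a careful $\lambda$-lemma analysis at the hyperbolic fixed point $\mathrm{F}$, which I expect to combine with a Lyapunov-type bound from the function $Z$ introduced in the proof of Lemma~\ref{ILemma}, extended to the full state space (where it satisfies $Z^\prime=-2Z\Sigma_+^2+O(r+\ell)$). The conclusion $\mathrm{P}=\mathrm{F}$ then follows.
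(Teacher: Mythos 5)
Your opening steps match the paper's: invariance of the $\alpha$-limit set together with Lemma~\ref{ILemma} forces $\mathrm{F}$ into $\alpha(\gamma)$, and the saddle structure of $\mathrm{F}$ (Lemmas~\ref{ILemma} and~\ref{FLemma}) is indeed the right object to exploit. But the contradiction you then try to extract from the monotonicity and exponential expansion of $\ell$ does not exist. Since $H_D>0$ on the relevant time range, $\ell$ increases monotonically in $\tau$, hence decreases to $0$ as $\tau\to-\infty$ (consistent with Lemma~\ref{ellLemma}); the observation that ``$\ell$ grows by a definite factor during each pass near $\mathrm{F}$'' is entirely compatible with $\ell(\sigma_n)\to 0$ --- read backward in time, $\ell$ simply decays geometrically from one visit to the next, and since $\ell'\leq 2\ell$ everywhere one even has the a priori bound $\ell(\tau)\geq \ell(\tau_0)\,e^{2(\tau-\tau_0)}$, whose right-hand side tends to $0$ as $\tau\to-\infty$. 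So no quantitative refinement ($\lambda$-lemma, extension of the function $Z$) will rescue this step: the scenario you flag as the obstacle (exit from the neighborhood of $\mathrm{F}$ along the $r$-direction with $\ell$ remaining tiny) is not a technical nuisance to be excluded but precisely the behavior that has to be analyzed.

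The idea you are missing is what to do with the point of the unstable manifold of $\mathrm{F}$ that the $\alpha$-limit set must contain. By a Butler--McGehee-type argument there is $\hat{\mathrm{P}}\in\alpha(\gamma)$, $\hat{\mathrm{P}}\neq\mathrm{F}$, on the unstable manifold of $\mathrm{F}$; Lemma~\ref{ellLemma} forces $\ell_{|\hat{\mathrm{P}}}=0$, so $\hat{\mathrm{P}}$ lies on the branch of the unstable manifold pointing into $r>0$, $\ell=0$. The orbit through $\hat{\mathrm{P}}$ represents a Bianchi type~IX Vlasov solution and therefore recollapses, i.e., it reaches $H_D=0$ in finite time (Lemma~\ref{HDLemma}, via the Lin--Wald theorem). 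Since $\gamma(\tau_n)\to\hat{\mathrm{P}}$ for some sequence $\tau_n\to-\infty$, continuous dependence on initial data yields times $\hat{\tau}_n\to-\infty$ with $H_D(\hat{\tau}_n)\to 0$ along $\gamma$, contradicting the fact that $H_D$ is bounded away from zero as $\tau\to-\infty$ (Lemma~\ref{HDLemma}). This is the paper's argument: the contradiction comes from the recollapse property of the orbits emanating from $\mathrm{F}$ inside $\{\ell=0\}$, not from the growth of $\ell$.
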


\begin{proof}
Let the orbit under consideration be denoted by $\gamma$. 
Assume that $\mathrm{P} \neq \mathrm{F}$. 
Together with $\mathrm{P}$, the entire orbit through $\mathrm{P}$ and
its $\omega$-limit point $\mathrm{F}$ must be in the $\alpha$-limit set of $\gamma$; see Lemma~\ref{ILemma}.
Since $\mathrm{F}$ is a saddle point, see Lemmas~\ref{ILemma} and~\ref{FLemma}, there is a point $\hat{\mathrm{P}}$
on the unstable manifold of $\mathrm{F}$ that is contained in $\alpha(\gamma)$. 
(Lemma~\ref{ellLemma} enforces $\hat{\mathrm{P}}$ to be located on $\ell = 0$.)
Let $(\tau_n)_{n\in\mathbb{N}}$ with $\tau_n\rightarrow {-\infty}$ ($n\rightarrow \infty$) 
denote a sequence of times such that $\gamma(\tau_n) \rightarrow \hat{\mathrm{P}}$
($n\rightarrow \infty$); by assumption, none of the points $\gamma(\tau_n)$ are contained on the unstable manifold
of $\mathrm{F}$. The orbit through $\hat{\mathrm{P}}$ represents a solution
of the Einstein-Vlasov equations that satisfies Lemma~\ref{HDLemma}.
By continuous dependence on initial data we can thus construct a sequence of times $\hat{\tau}_n$  
with $\hat{\tau}_n\rightarrow {-\infty}$ ($n\rightarrow \infty$) such that
$H_D(\hat{\tau}_n) \rightarrow 0$ ($n\rightarrow \infty$); 
this is in contradiction to Lemma~\ref{HDLemma}.
\end{proof}

\begin{Lemma}\label{novarthetapi2Lemma}
There does not exist any point 
in the (interior of the) boundary subset $\vartheta = \textfrac{\pi}{2}$, $\ell = 0$, see~\eqref{varthetapi2},
that is an $\alpha$-limit point of 
an orbit of~\eqref{dynsyspolar}.
\end{Lemma}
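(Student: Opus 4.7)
My plan is to adapt the strategy used in the proof of Lemma~\ref{nor0Lemma} to the boundary component $\vartheta=\pi/2$: exploit the invariance of $\alpha$-limit sets together with Lemma~\ref{IILemma} to force $\mathrm{R}_\sharp\in\alpha(\gamma)$, use the saddle structure at $\mathrm{R}_\sharp$ together with Lemma~\ref{ellLemma} to push an $\alpha$-limit point onto a specific 1D curve of $W^u(\mathrm{R}_\sharp)$, and finally invoke Lemma~\ref{nor0Lemma} itself for the contradiction.

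First I assume, for contradiction, that an orbit $\gamma$ of~\eqref{dynsyspolar} possesses an $\alpha$-limit point $\mathrm{P}$ in the interior of $\{\vartheta=\pi/2,\ell=0\}$, so in particular $r_{|\mathrm{P}}>0$. The sets $\{\vartheta=\pi/2\}$ and $\{\ell=0\}$ are both invariant for~\eqref{dynsyspolar} (the first because $\vartheta'$ carries the factor $\sin(2\vartheta)$, the second because $\ell'$ carries $\ell$ as a factor), so by invariance of $\alpha$-limit sets the entire orbit $\mu$ through $\mathrm{P}$ within this 2D subset is contained in $\alpha(\gamma)$. By Lemma~\ref{IILemma} (cf.\ Fig.~\ref{IIfig}), $\omega(\mu)=\mathrm{R}_\sharp$, and hence $\mathrm{R}_\sharp\in\alpha(\gamma)$.

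Next I would compute the linearization of~\eqref{dynsyspolar} at $\mathrm{R}_\sharp$. A direct calculation yields eigenvalues $-3/2$ and $-3/4$ along the $r$- and $\Sigma_+$-directions (so $W^s(\mathrm{R}_\sharp)$ is 2D and contained in $\{\ell=0\}$) and $+3$ and $+2$ along the $\vartheta$- and $\ell$-directions (spanning $W^u(\mathrm{R}_\sharp)$). Since $\ell_{|\gamma}>0$, $\gamma\not\subset W^s(\mathrm{R}_\sharp)$; mimicking the inclination/$\lambda$-lemma argument invoked in the proof of Lemma~\ref{nor0Lemma}, there exists a point $\hat{\mathrm{P}}\in W^u(\mathrm{R}_\sharp)\cap\alpha(\gamma)$ with $\hat{\mathrm{P}}\neq\mathrm{R}_\sharp$. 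By Lemma~\ref{ellLemma} we have $\ell_{|\hat{\mathrm{P}}}=0$; moreover, since $\{r=0\}$ is invariant and the unstable eigendirections at $\mathrm{R}_\sharp$ are tangent to it, $W^u(\mathrm{R}_\sharp)\subset\{r=0\}$ locally, so $\hat{\mathrm{P}}$ lies on the 1D unstable curve $W^u(\mathrm{R}_\sharp)\cap\{r=0,\ell=0\}$, tangent to the $\vartheta$-direction at $\mathrm{R}_\sharp$. Choosing $\hat{\mathrm{P}}$ close enough to $\mathrm{R}_\sharp$ along this curve, its coordinates satisfy $r_{|\hat{\mathrm{P}}}=0$, $\vartheta_{|\hat{\mathrm{P}}}\in(0,\pi/2)$ arbitrarily close to $\pi/2$, $\Sigma_{+|\hat{\mathrm{P}}}$ close to $1/2$, $\ell_{|\hat{\mathrm{P}}}=0$; so $\hat{\mathrm{P}}$ sits in the interior of the boundary subset~\eqref{r0} at $\ell=0$, and $\hat{\mathrm{P}}\neq\mathrm{F}$ because $\vartheta_{|\mathrm{F}}=\vartheta_0$ is a fixed value in $(0,\pi/2)$ while $\vartheta_{|\hat{\mathrm{P}}}$ can be made arbitrarily close to $\pi/2$. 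This directly contradicts Lemma~\ref{nor0Lemma}.

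The hard part is the inclination-lemma step: extracting $\hat{\mathrm{P}}\in W^u(\mathrm{R}_\sharp)\cap\alpha(\gamma)$ with $\hat{\mathrm{P}}\neq\mathrm{R}_\sharp$ from the saddle structure and the transversality $\gamma\not\subset W^s(\mathrm{R}_\sharp)$. The same step appears (implicitly) in the proof of Lemma~\ref{nor0Lemma}, so I would treat it in the same spirit, relying on the local stable/unstable manifold theorem combined with a compactness/continuity argument rather than unpacking it in detail.
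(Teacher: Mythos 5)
Your argument is correct and is essentially the paper's own proof: invariance of the $\alpha$-limit set plus Lemma~\ref{IILemma} forces $\mathrm{R}_\sharp\in\alpha(\gamma)$, the saddle structure at $\mathrm{R}_\sharp$ together with Lemma~\ref{ellLemma} places a point $\hat{\mathrm{P}}\neq\mathrm{R}_\sharp$ of $\alpha(\gamma)$ on the one-dimensional unstable curve in $\{r=0,\ell=0\}$ (the orbit $\mathrm{R}_\sharp\rightarrow\mathrm{F}$ of Fig.~\ref{Ifig}), and Lemma~\ref{nor0Lemma} gives the contradiction. The explicit eigenvalues $(-3/2,+3,-3/4,+2)$ you supply are consistent with the paper's (unstated) linearization, so the only difference is that you spell out details the paper leaves implicit.
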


\begin{proof}
Assume that there exists an orbit $\gamma$ such that $\mathrm{P} \in \alpha(\gamma)$,
where $\mathrm{P}$ lies on the (interior of the) boundary subset $\vartheta = \textfrac{\pi}{2}$, see~\eqref{varthetapi2}.
Together with $\mathrm{P}$, the entire orbit through $\mathrm{P}$ and
its $\omega$-limit point $\mathrm{R}_\sharp$ must be in the $\alpha$-limit set of $\gamma$; see Lemma~\ref{IILemma}.
The fixed point $\mathrm{R}_\sharp$ is a saddle point, see Lemmas~\ref{ILemma} and~\ref{IILemma}; hence 
there is a point $\hat{\mathrm{P}}$ on the unstable manifold of $\mathrm{R}_\sharp$ that is contained in $\alpha(\gamma)$. 
The unstable manifold of $\mathrm{R}_\sharp$ is two-dimensional; a particular orbit contained in it is the orbit
$\mathrm{R}_\sharp \rightarrow \mathrm{F}$ depicted in Fig.~\ref{Ifig}; the orthogonal unstable direction is the $\ell$-direction.
Lemma~\ref{ellLemma} implies that $\ell_{|\hat{\mathrm{P}}} = 0$, hence $\hat{\mathrm{P}}$ is located on the orbit
$\mathrm{R}_\sharp \rightarrow \mathrm{F}$ of Fig.~\ref{Ifig}. However, this is a contradiction to Lemma~\ref{nor0Lemma}.
\end{proof}

\begin{Lemma}\label{notonelineLemma}
There does not exist any point 
on the (interior of the) line\/ $\mathrm{T}_\sharp \rightarrow \mathrm{R}_\sharp \leftarrow \mathrm{Q}_\sharp$,
see Figs.~\ref{Ifig} and~\ref{IIfig},
that is an $\alpha$-limit point of 
an orbit of~\eqref{dynsyspolar}.
\end{Lemma}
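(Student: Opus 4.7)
The plan is to adapt the structure of the proofs of Lemmas~\ref{nor0Lemma} and~\ref{novarthetapi2Lemma}: assume, for contradiction, the existence of an orbit $\gamma$ of~\eqref{dynsyspolar} and a point $\mathrm{P}$ in the interior of the invariant line $\{r = 0, \vartheta = \pi/2, \ell = 0\}$ with $\mathrm{P} \in \alpha(\gamma)$. I would then exploit invariance of $\alpha(\gamma)$ to slide along this line to $\mathrm{R}_\sharp$, leave $\mathrm{R}_\sharp$ along its unstable manifold, and appeal to the already-proved Lemma~\ref{nor0Lemma} to produce a contradiction.

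First, I would observe that the line $\mathrm{T}_\sharp \rightarrow \mathrm{R}_\sharp \leftarrow \mathrm{Q}_\sharp$ is precisely the intersection of the invariant boundary subsets $\{r = 0, \ell = 0\}$ and $\{\vartheta = \pi/2, \ell = 0\}$; the flow induced on it (read off from either~\eqref{Ibou} or~\eqref{flowSsharp}) reduces to a scalar ODE in $\Sigma_+$ whose fixed points are $\mathrm{T}_\sharp, \mathrm{R}_\sharp, \mathrm{Q}_\sharp$ and for which every interior orbit is asymptotic to $\mathrm{R}_\sharp$ as $\tau \to +\infty$. Since $\alpha(\gamma)$ is closed and flow-invariant, containing $\mathrm{P}$ forces it to contain the whole forward orbit through $\mathrm{P}$ and hence $\mathrm{R}_\sharp$.

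Next, following the saddle-shadowing argument used in Lemma~\ref{novarthetapi2Lemma}, I would use the hyperbolicity of $\mathrm{R}_\sharp$---an eigenvalue computation of the Jacobian of the full $(r,\vartheta,\Sigma_+,\ell)$-system at $\mathrm{R}_\sharp$ yields two stable and two unstable directions, the unstable eigenvectors being aligned with $\vartheta$ and with $\ell$---to extract a point $\hat{\mathrm{P}} \ne \mathrm{R}_\sharp$ on the unstable manifold $W^u(\mathrm{R}_\sharp)$ with $\hat{\mathrm{P}} \in \alpha(\gamma)$. Lemma~\ref{ellLemma} then gives $\ell_{|\hat{\mathrm{P}}} = 0$, so $\hat{\mathrm{P}}$ lies in the one-dimensional slice $W^u(\mathrm{R}_\sharp) \cap \{\ell = 0\}$, which (since the remaining unstable direction is $\vartheta$) coincides with the heteroclinic orbit $\mathrm{R}_\sharp \rightarrow \mathrm{F}$ of Fig.~\ref{Ifig}.

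To conclude, $\hat{\mathrm{P}}$ is an interior point of this heteroclinic orbit and therefore lies in the interior of the boundary subset $\{r = 0, \ell = 0\}$ described by~\eqref{r0}; Lemma~\ref{nor0Lemma} then forces $\hat{\mathrm{P}} = \mathrm{F}$, incompatible with $\hat{\mathrm{P}}$ being strictly between $\mathrm{R}_\sharp$ and $\mathrm{F}$ on the open orbit $\mathrm{R}_\sharp \rightarrow \mathrm{F}$. The main obstacle I anticipate is the saddle-shadowing step: rigorously extracting a point of $W^u(\mathrm{R}_\sharp)$ from $\alpha(\gamma)$ requires a $\lambda$-lemma-style argument ensuring that $\gamma$ genuinely peels off along the unstable manifold instead of running tangent to the stable one. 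Since precisely this argument was already needed in Lemma~\ref{novarthetapi2Lemma}, I would appeal to it by reference rather than redo it, treating the saddle extraction as the black box that the earlier lemma already establishes.
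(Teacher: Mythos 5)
Your proposal is correct and takes essentially the same route as the paper, whose proof of Lemma~\ref{notonelineLemma} consists of the single remark that it is analogous to the proof of Lemma~\ref{novarthetapi2Lemma}: flow forward along the line to $\mathrm{R}_\sharp$, extract a point of the unstable manifold of $\mathrm{R}_\sharp$ inside $\alpha(\gamma)$, use Lemma~\ref{ellLemma} to place that point on the orbit $\mathrm{R}_\sharp \rightarrow \mathrm{F}$, and contradict Lemma~\ref{nor0Lemma}. You in fact supply more detail (the identification of the line as the intersection of the two invariant boundary subsets and the saddle-extraction caveat) than the paper itself does.
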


\begin{proof} The proof is analogous to the proof of Lemma~\ref{novarthetapi2Lemma}.
\end{proof}

Summarizing the statements of the previous lemmas we obtain the main theorem
of this work. 
We give a formal statement of the theorem using
the developed framework; the less formal interpretation of the theorem
is presented in the concluding remarks.

\begin{Theorem}\label{BianchiIXtheo} 
Consider an orbit of~\eqref{dynsyspolar} representing 
(the expanding phase of) a Bianchi type~IX solution with Vlasov matter.
Then the $\alpha$-limit set of this orbit is either the fixed point $\mathrm{F}$,
which is the non-generic case, or it is the heteroclinic cycle 
of orbits connecting the four fixed points $\mathrm{T}_\flat$, $\mathrm{T}_\sharp$,
$\mathrm{Q}_\sharp$, $\mathrm{Q}_\flat$, see Fig.~\ref{BianchiIXfig}; this is the generic case.
\end{Theorem}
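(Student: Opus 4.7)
The plan is to combine the preceding lemmas to identify all possible $\alpha$-limit points of an orbit $\gamma$ representing the expanding phase of an LRS Bianchi type~IX Vlasov solution and then to assemble them into either $\{\mathrm{F}\}$ or the full heteroclinic cycle. First, $\alpha(\gamma)$ is nonempty, compact, invariant, and connected. By Lemma~\ref{ellLemma} it lies in $\{\ell=0\}$; by Lemma~\ref{r0thetapi2} it is contained in $\{r=0\}\cup\{\vartheta=\pi/2\}$. Lemmas~\ref{ILemma} and~\ref{IILemma} completely characterize the induced flow on the two two-dimensional boundary subsets $\{r=0,\ell=0\}$ (with interior attractor $\mathrm{F}$, boundary fixed points $\mathrm{T}_\flat,\mathrm{T}_\sharp,\mathrm{R}_\sharp,\mathrm{Q}_\sharp,\mathrm{Q}_\flat$, and their connecting orbits) and $\{\vartheta=\pi/2,\ell=0\}$ (with the interior arc from $\mathrm{T}_\sharp$ to $\mathrm{Q}_\sharp$ and the three fixed points $\mathrm{T}_\sharp,\mathrm{R}_\sharp,\mathrm{Q}_\sharp$).

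Next I would apply the exclusion lemmas. Lemma~\ref{nor0Lemma} excludes every point in the interior of $\{r=0,\ell=0\}$ other than $\mathrm{F}$; Lemma~\ref{novarthetapi2Lemma} excludes every interior point of $\{\vartheta=\pi/2,\ell=0\}$; Lemma~\ref{notonelineLemma} excludes the open line $\mathrm{T}_\sharp\to\mathrm{R}_\sharp\leftarrow\mathrm{Q}_\sharp$. An analogous shadowing argument (using that the unstable manifold of $\mathrm{R}_\sharp$ is spanned by the $\ell$-direction, ruled out by Lemma~\ref{ellLemma}, and the orbit $\mathrm{R}_\sharp\to\mathrm{F}$ within $\{r=0,\ell=0\}$, ruled out by Lemma~\ref{nor0Lemma}) excludes $\mathrm{R}_\sharp$ itself. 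The surviving list of candidate $\alpha$-limit points is therefore $\mathrm{F}$, the four cycle fixed points $\mathrm{T}_\flat,\mathrm{T}_\sharp,\mathrm{Q}_\sharp,\mathrm{Q}_\flat$, and the boundary orbits connecting them around the heteroclinic cycle of Fig.~\ref{BianchiIXfig}.

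I would then establish the dichotomy. If $\alpha(\gamma)=\{\mathrm{F}\}$ the statement holds, and by Lemma~\ref{FLemma} this occurs only on a one-parameter family of orbits, which is non-generic in the four-dimensional state space $\mathcal{Y}_{\mathrm{IX}}^+\times(0,1)$. If $\alpha(\gamma)\neq\{\mathrm{F}\}$, it must contain a candidate other than $\mathrm{F}$; invariance and closedness promote it to a full orbit together with its endpoints, so at least one cycle fixed point lies in $\alpha(\gamma)$. At each such fixed point I would run a shadowing argument modeled on the proofs of Lemmas~\ref{nor0Lemma} and~\ref{novarthetapi2Lemma}: pick $\tau_n\to-\infty$ with $\gamma(\tau_n)$ converging to a point on the unstable manifold of the fixed point but off the relevant stable set, and invoke continuous dependence together with Lemma~\ref{HDLemma} to show that $\gamma$ trails the next heteroclinic segment arbitrarily closely in backward time. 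Iterating around the four cycle fixed points places the neighboring fixed points and the connecting orbits into $\alpha(\gamma)$, so the full cycle is contained in $\alpha(\gamma)$. Since no further candidates remain, $\alpha(\gamma)$ equals the cycle; the mixed case $\mathrm{F}\in\alpha(\gamma)$ with additional points is ruled out separately by noting that connectedness would force a nonempty intersection of $\alpha(\gamma)$ with the unstable manifold of $\mathrm{F}$, all of whose points lie either off $\ell=0$ or off $\{r=0\}\cup\{\vartheta=\pi/2\}$.

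The main obstacle is precisely this propagation step: going from \emph{one} cycle point in $\alpha(\gamma)$ to the containment of the \emph{entire} cycle. At each of $\mathrm{T}_\flat,\mathrm{T}_\sharp,\mathrm{Q}_\sharp,\mathrm{Q}_\flat$ one must perform a careful local analysis of the stable and unstable manifolds in four dimensions and control the possible escape directions of $\gamma$, ensuring that the only invariant structure that $\alpha(\gamma)$ can next attain is the subsequent heteroclinic segment. The trapping supplied by Lemma~\ref{HDLemma} (past invariance of $H_D>0$), Lemma~\ref{ellLemma} ($\alpha$-limits on $\ell=0$) and the interior exclusion of Lemma~\ref{nor0Lemma} provide the tools needed to close the shadowing argument at each of the four fixed points of the cycle.
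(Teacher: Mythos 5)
Your proposal is correct and follows essentially the same route as the paper: the $\alpha$-limit set is forced onto $\{\ell=0\}$ by Lemma~\ref{ellLemma}, onto the closures of $\{r=0\}$ and $\{\vartheta=\tfrac{\pi}{2}\}$ by Lemma~\ref{r0thetapi2}, and then whittled down to either $\{\mathrm{F}\}$ or the heteroclinic cycle by Lemmas~\ref{nor0Lemma},~\ref{novarthetapi2Lemma} and~\ref{notonelineLemma}, with Lemma~\ref{FLemma} supplying the non-generic case. The only difference is that you explicitly flag, and sketch a shadowing argument for, the exclusion of $\mathrm{R}_\sharp$ itself and the final ``propagation'' step showing that the $\alpha$-limit set is the \emph{entire} cycle rather than a proper invariant subset of it; the paper leaves these points implicit, so your extra care here is a refinement rather than a divergence.
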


\begin{proof}
By Lemma~\ref{FLemma} there exists a non-generic family of orbits whose $\alpha$-limit
set is $\mathrm{F}$. 
Consider an orbit that is not a member of this family.
Lemma~\ref{ellLemma} implies that the $\alpha$-limit set must be located on the
boundary $\ell = 0$ of the state space of~\eqref{dynsyspolar}.
Lemma~\ref{r0thetapi2} enforces $\alpha$-limit points to lie on the closures
of the boundary subsets $r = 0$ and $\vartheta = \textfrac{\pi}{2}$.
However, Lemmas~\ref{nor0Lemma} and~\ref{novarthetapi2Lemma} imply that points in the interior of
these subsets are excluded. Finally, by Lemma~\ref{notonelineLemma}
we find that the set that remains as the possible location of $\alpha$-limit points
is a one-dimensional set, the heteroclinic cycle connecting
the fixed points $\mathrm{T}_\flat$, $\mathrm{T}_\sharp$, $\mathrm{Q}_\sharp$, $\mathrm{Q}_\flat$.
\end{proof}

\begin{figure}[Ht!]
\begin{center}
\psfrag{tf}[cc][cc][0.7][0]{$\mathrm{T}_\flat$}
\psfrag{ts}[cc][cc][0.7][0]{$\mathrm{T}_\sharp$}
\psfrag{qf}[cc][cc][0.7][0]{$\mathrm{Q}_\flat$}
\psfrag{qs}[cc][cc][0.7][0]{$\mathrm{Q}_\sharp$}
\psfrag{f}[cc][cc][0.7][0]{$\mathrm{F}$}
\psfrag{c}[cc][cc][0.7][0]{$\mathrm{C}_\sharp$}
\psfrag{d}[cc][cc][0.7][0]{$\mathrm{R}_\sharp$}
\psfrag{r}[cc][cc][0.7][0]{$\mathrm{R}_\flat$}
\psfrag{sig+}[cc][cc][0.7][0]{$\Sigma_+$}
\psfrag{th}[cc][cc][0.7][0]{$\vartheta$}
\psfrag{rc}[cc][cc][0.7][0]{$r$}
\psfrag{h0}[cc][cc][0.7][-45]{$H_D=0$}
\includegraphics[width=0.5\textwidth]{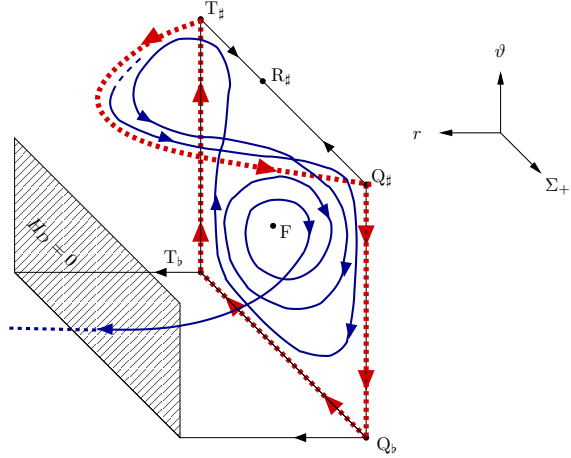}
\end{center}
\caption{The projection of a generic orbit (blue line) of the dynamical system~\eqref{dynsyspolar}
to the set $\ell =0$.
Dashed red lines 
are orbits that lie on the boundary of the state space 
and connect to form a heteroclinic cycle. 
There exist non-generic orbits that are asymptotic to the fixed point $\mathrm{F}$ (not depicted).}
\label{BianchiIXfig}
\end{figure}


\section{Concluding remarks}

\label{conc}


Let 
us give an interpretation of the main result, Theorem~\ref{BianchiIXtheo}, in informal terms:
Every LRS Bianchi type~IX solution
with collisionless matter (i.e., solution of the Einstein-Vlasov system)
possesses an initial singularity (which is chosen to be at $t = 0$).
The behavior as $t\rightarrow 0$ of generic solutions, where by `generic solutions' we mean a family of solutions that
corresponds to a set of initial data of full measure, 
is characterized by oscillations between the Taub solution~\eqref{taub} and the 
non-flat LRS solution~\eqref{solQ}: There exists an infinite sequence
of (increasingly small) time intervals such that the components of the 
metric~\eqref{metric} are well approximated by~\eqref{taub},
and a sequence of time intervals where~\eqref{solQ} yields a good approximation;
the accuracy of the approximation increases as $t\rightarrow 0$.
During the transitions between~\eqref{taub} and~\eqref{solQ} the metric
takes an entirely different form; we merely note that
each transition from the non-flat LRS solution to the Taub solution (which correspond to 
an orbit of~\eqref{dynsyspolar} closely following the orbit $\mathrm{Q}_\flat \rightarrow \mathrm{T}_\flat$
in Fig.~\ref{BianchiIXfig}) is characterized by the fact that the variable $\Omega$
attains values close to $1$. This means that 
the influence of the matter on the (asymptotic) dynamics cannot be neglected.
The oscillatory behavior toward the singularity of 
generic LRS Bianchi type~IX cosmological models with collisionless matter 
is thus qualitatively different from that of perfect fluid 
models with the same symmetry. In fact, in the LRS case, perfect fluid cosmologies of 
Bianchi type~IX are asymptotic to the Taub solution~\eqref{taub}, which 
is also the behavior of LRS Bianchi type~IX {\it vacuum} cosmological models. 
Therefore, as the `structure' of the singularity is concerned, 
the assumption of perfect fluid matter does not change the vacuum behavior
(`matter does not matter'), 
while `collisionless matter matters'; Vlasov matter
has an important effect on the structure of the singularity.

The occurrence of oscillatory dynamics toward the singularity 
is intimately connected with the (in)stability of the Kasner
fixed points (i.e., the Taub points and the non-flat
LRS points in the context of the present work).
While stable in the context of the simplest cosmological models,
which are characterized by a small number of (true) 
degrees of freedom and by the isotropy of the matter model,
the Kasner points run the risk of loosing their stability
when degrees of freedom are added to the problem 
or when the matter model becomes anisotropic.
Loss of stability is then a possible cause for
the existence of heteroclinic structures which in turn 
induce oscillatory behavior of cosmological models.
The prime example is the
transition from (non-LRS) Bianchi type~$\mathrm{VI}_0$ and $\mathrm{VII}_0$ models
to (non-LRS) type~VIII and~IX models in the vacuum case.
The dimension of the state space increases, the Kasner points
lose their stability properties, and the dynamics
of (vacuum) cosmological models become the oscillatory
``Mixmaster'' dynamics.
Similarly in spirit, in the context of LRS Bianchi models with Vlasov matter,
the dimension of the state space increases when one
goes over from LRS type~I to LRS type~II models.
The loss of stability induces oscillatory behavior
of LRS Bianchi type~II solutions with Vlasov matter, see~\cite{RT,RU}.
However, 
already a change
of matter model alone (where the number of degrees of
freedom of the problem are unaffected) can change the stability properties
of the Kasner points: For instance, while the asymptotic dynamics of
LRS Bianchi type~I solution with Vlasov matter are ``monotone'', one
observes oscillatory behavior for LRS Bianchi type~I solutions 
with elastic matter~\cite{CH5}.  
The present work provides another example: The state spaces
and reduced dynamical systems that represent the dynamics
of LRS type~IX perfect fluid models 
and LRS type~IX models with massless Vlasov matter
are of the same dimension; the same is true
for the state space of LRS type~IX perfect fluid models 
with non-linear equations of state and the state space
of type~IX models with massive Vlasov matter
considered in this paper.
Despite this fact, the qualitative dynamics 
are radically different, which is ultimately because the Kasner
points exhibit different stability properties.
In brief: ``Matter matters''.

Note that by the symmetry of the problem, the behavior
of generic solutions toward the final singularity (see Lemma~\ref{HDLemma})
mirrors the behavior toward the initial singularity.
Furthermore, we note 
there exists a non-generic family of LRS Bianchi type~IX solutions 
with collisionless matter 
that isotropize toward the initial singularity, see Lemma~\ref{FLemma}. 
These solutions are asymptotic 
to the isotropic Friedmann-Lema\^itre-Robertson-Walker model~\eqref{frw}, i.e.,
$g_{i j}(t) \sim t \delta_{i j}$ as $t\rightarrow 0$. 
Although these non-generic solutions do not influence the asymptotic behavior 
of generic ones, they may affect the intermediate behavior 
of generic solutions and thus play a relevant role in physics. 
We refer to~\cite{WE} for a discussion on this interesting topic.

In this paper we have shown that collisionless matter 
has an important effect on the structure of the singularity
already in the context of solutions of the Einstein equations
with high symmetries (LRS Bianchi type~IX).
It is to be expected that this type of effect becomes
even more pronounced when the degree of symmetry is reduced.
In particular, an exciting question is in which manner collisionless matter
will influence the so-called Mixmaster behavior of
generic (non-LRS) vacuum Bianchi type~IX models.
Na\"ively, one might expect that the two types of oscillations,
the Mixmaster oscillations (which are induced, in a manner of speaking,
by gravity itself) and the oscillations caused by the anisotropy
of the matter model 
`intertwine' to yield a intricate oscillatory structure.
Unfortunately, it is difficult to give a
well formulated conjecture, the reason being that 
it is not known at present how to extend the dynamical 
systems method to the full (non-LRS) Bianchi type~IX case.

\vspace{0.5cm}

\noindent {\bf Acknowledgments:} 
The authors would like to thank an anonymous referee for valuable suggestions.
S.\ C.\ is supported by  Ministerio Ciencia e Innovaci\'on, Spain (Project MTM2008-05271).

\end{document}